\documentclass[runningheads]{llncs}
\usepackage{amsmath,amssymb}
\usepackage{subfigure}
\usepackage[linesnumbered,ruled,noend]{algorithm2e}
\usepackage[noend]{algorithmic}
\usepackage{listings}
\usepackage{threeparttable}
\usepackage[T1]{fontenc}
\usepackage{pgfplots}
\usepackage{tikz}
\usepackage{float}
\usepackage{enumitem}
\usepackage{color}
\usepackage{cite}
\usepackage{hyperref}
\usetikzlibrary{patterns}
\pgfplotsset{compat=1.18}

\newcommand{\nop}[1]{}

\begin{document}


\title{Efficient Cloud-edge Collaborative Approaches to SPARQL Queries over Large RDF graphs} 

\author{Shidan Ma\inst{1}\and
		Peng Peng\inst{1}\and
		Xu Zhou\inst{1}\and
		M. Tamer {\"O}zsu\inst{2}\and
		Lei Zou\inst{3}\and
		Guo Chen\inst{1}}
\authorrunning{S. M et al.}
%
\institute{Hunan University, China \\
\email{\{msd673, hnu16pp, zhxu, guochen\}@hnu.edu.cn}
\and
University of Waterloo, Canada\\
\email{tamer.ozsu@uwaterloo.ca}
 \and
Peking University, China\\
\email{zoulei@pku.edu.cn}}

\maketitle
\begin{abstract}
With the increasing use of RDF graphs, storing and querying such data using SPARQL is a critical problem. Current solutions rely on cloud-based data management architectures, but often suffer from performance bottlenecks when bandwidth is limited or system load is high. 
We explore, for the first time, an edge computing solution to improve query performance. This approach requires offloading query processing to edge servers, which involves addressing two challenges: data localization and network scheduling.
Data localization requires computing the subgraphs maintained on edge servers and to quickly identify the servers that can handle specific queries. To address this challenge, we introduce a new concept of pattern-induced subgraphs. 
Network scheduling involves efficiently assigning queries to edge and cloud servers to optimize overall system performance. 
We tackle this by constructing a system model and formulate query assignment and resource allocation as a Mixed Integer Nonlinear Programming (MINLP) problem, which we solve using a modified branch-and-bound algorithm.
Experiments under a real cloud platform demonstrate that our proposed method outperforms the state-of-the-art baseline methods in terms of efficiency. The codes are available on GitHub\footnote{
https://github.com/msd673/edgeComputing\_gurobi.git}.
\end{abstract}

\section{Introduction}\label{sec:Introduction}
 With the continuous development of the internet and data science, graph data models have become an essential means for representing and processing complex relational data. The Resource Description Framework (RDF) is a standard model for describing and representing resource information, where each triple $\langle s, p, o \rangle$ captures a relationship between resources. An RDF dataset is commonly viewed as a graph, where subjects and objects are vertices, and triples are edges. 
 SPARQL\footnote{https://www.w3.org/TR/sparql11-query/} is the standard query language for efficiently extracting useful information from RDF graphs. 
 A SPARQL query can be regarded as a query graph with variables, and answering it over an RDF dataset is equivalent to finding subgraph matches of the query graph within the RDF graph.
 
As the use of RDF graphs proliferates across many applications, cloud-based RDF management systems have become the mainstream solution for storing and querying RDF data \cite{DBLP:journals/vldb/AliSYHN22,10.14778/3151106.3151109,DBLP:journals/vldb/KaoudiM15,10.1145/2588555.2594535,DBLP:journals/fcsc/Ozsu16,DBLP:journals/tkde/HusainMMKT11}. For instance, Amazon Web Services offers Neptune \cite{ISWC2018:Neptune}, a graph database service that supports RDF and SPARQL in a fully managed cloud environment. 
Cloud-based SPARQL services enable users to submit queries from anywhere around the world; however, since users may be located in different geographic regions, cross-region data transmission may potentially result in significant query latency. 
To effectively reduce these latencies, some solutions have been proposed to offload computation tasks from the cloud to the client side \cite{DBLP:conf/sigmod/Hartig13,DBLP:journals/ws/VerborghSHHVMHC16,DBLP:conf/semweb/HartigLP17}. While this approach can reduce the computational load on the cloud, the client typically evaluates only individual triple patterns and must transmit a large number of intermediate results, which can still negatively impact query performance. 


{Edge computing has recently emerged as a promising paradigm to address these limitations by moving data processing closer to end users.
In real-world systems, these architectures have been widely deployed in various latency-sensitive scenarios. For instance, in industrial Internet systems \cite{qiu2020edge,LEPHUOC201625,8289317}, devices, sensors, and control units are naturally organized into explicit dependency structures that support functionalities such as state monitoring and anomaly detection. Similarly, in intelligent transportation systems \cite{10133894,5767845}, road networks and traffic entities are typically modeled as graphs, and the associated analysis tasks must be performed in a timely manner close to the data sources. 
}

This paper studies SPARQL query processing in an edge–cloud architecture, where the cloud collaborates with geographically distributed edge servers to handle end-user requests. Queries are submitted by end users and may be executed on edge servers or in the cloud. Due to the limited resources of edge servers, they can only store partial subgraphs of the complete RDF graph. Thus, it is important to tackle the challenge of determining which specific subgraphs should be stored on the edge servers and which queries should be executed by edge servers or sent to the cloud.
This paper focuses on a common deployment paradigm in which edge servers only collaborate with the cloud and not with each other \cite{TON2019:OnDisc,TMC2024:QoEAware,TMC2024:GameTheoretic}. When end users and edge servers are geographically separated, direct communication with the cloud is more efficient due to the high cost of inter-edge communication.

\begin{example}
Fig. \ref{fig:system_overview} shows a cloud–edge system with two edge servers (ESs) and five end users (EUs). The entire RDF graph is stored in the cloud, while pattern-induced subgraphs are stored on $ES_1$ and $ES_2$, respectively.
$EU_1$ and $EU_2$ connect only to $ES_1$, $EU_4$ and $EU_5$ only to $ES_2$, and $EU_3$ can connect to both.
End users issue SPARQL queries, which are processed by the associated edge servers or the cloud, and the results are returned to the users.
\end{example}

\begin{figure}
    \centering
      \includegraphics[width=0.85\columnwidth]{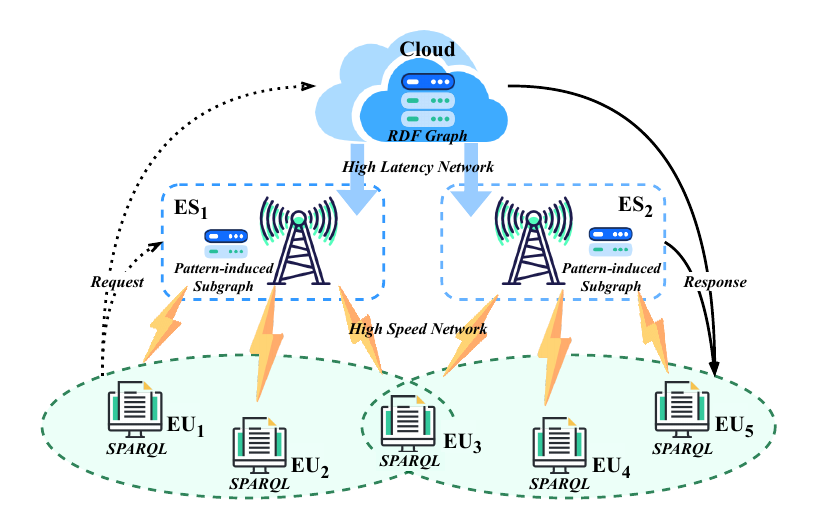}
\caption{Example Edge-Cloud System}%
 \label{fig:system_overview}
\end{figure}
\setlength{\textfloatsep}{0pt}

Real SPARQL workloads often exhibit strong locality, with queries submitted from a particular area during a given period frequently sharing similar structures \cite{DASFAA2018:FMQO,TKDE2021:FMQO}. These recurring query patterns can be identified and extracted. Given the finite resources of edge servers, each server can store only the subgraphs that match these frequently recurring patterns submitted by nearby users to improve query processing efficiency.

\begin{figure}
\centering
\subfigure[{Example RDF Graph}]{%
        \centering
		\includegraphics[width=0.9\columnwidth]{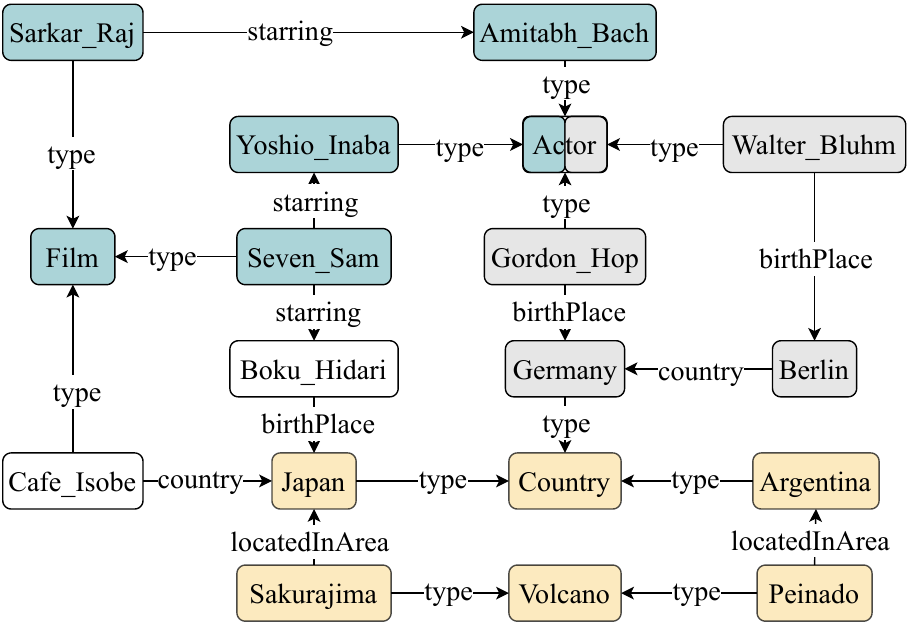}
       \label{fig:ExampleGraph}%
       }
   \subfigure[{{Example Workload}}]{%
        \centering
		\includegraphics[width=0.75\columnwidth]{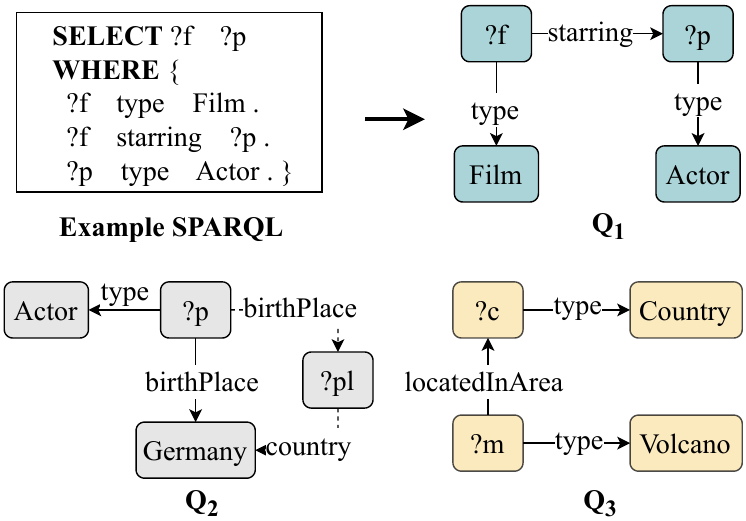}
       \label{fig:FirstExampleQuery}%
       }
\caption{Example RDF Graph and Workload}%
 \label{fig:ExampleGraphWorkload}
\end{figure}

To address this issue, based on recurring query patterns, this paper explores how to execute SPARQL queries in this edge-cloud architecture to reduce the latency of SPARQL query processing in the cloud. When a user near an edge server submits a query, the system first determines whether the query can be processed locally at the edge server. Even if it can, forwarding to the edge server is not automatic, and a further decision is made as to whether to execute the query at the edge server or at the cloud. This decision problem arises for a number of reasons, such as the current load on the edge server. The problem under consideration comprises two primary tasks: data localization and network scheduling. 

\emph{\underline{Data Localization.}} Given that each edge server has a limited capacity, it can only store some subgraphs of the complete RDF graph. Thus, it is crucial to determine which kinds of subgraphs should be maintained at the edge servers. This challenge falls within the scope of classical database problems related to data localization. 
In this paper, we introduce a novel concept of \emph{pattern-induced subgraphs}. These are subgraphs induced by the edges found within the matches of frequently queried patterns. Each edge server stores the pattern-induced subgraphs corresponding to several frequently queried patterns that it has previously seen. Thus, queries isomorphic to these patterns can be assuredly processed at the edge server. Thus, through the utilization of pattern-induced subgraphs, we present an effective approach for localizing queries.

\emph{\underline{Network Scheduling.}} As indicated above, even if a query can be executed at an edge server, a decision must be made whether to offload it to the edge or execute it at the cloud. Thus, we should design a method that allocates queries to appropriate computation nodes to enhance overall performance. This challenge is closely related to classical network scheduling problems. 
In this paper, we model the joint query assignment decision and computational resource allocation problem as a Mixed Integer Non-Linear Programming problem (MINLP). Each requirement mentioned above corresponds to a constraint within the MINLP. Solving the MINLP yields the optimal assignment of each query to an appropriate server.

\textbf{Contributions.} The main contributions of this paper are:

\begin{itemize}[left=0pt]
\item We introduce a novel challenge: the efficient offloading of SPARQL query processing in edge-cloud environments. This problem combines topics from both the database and the network domains, and as far as we know, this paper is the first to consider this particular problem (Section \ref{sec:Problem}).
\item For quickly determining which edge servers are capable of handling a specific query, we introduce a novel concept of \emph{pattern-induced subgraphs} to maintain some subgraphs of the entire RDF graph in the edge servers (Section \ref{sec:Problem}).
\item To appropriately allocate the queries submitted by end users to edge servers or the cloud, we model the joint formulation of query assignment and computational resource allocation as a MINLP. To tackle this complex problem effectively, we propose a modified branch-and-bound method (Section \ref{sec:ProblemSolution}).
\item Experiments show that our proposed methods can significantly improve the efficiency of SPARQL query processing in edge-cloud computing (Section \ref{sec:Experiments}). 
\end{itemize} 

\section{Background}\label{sec:Background}
\subsection{RDF and SPARQL}
An RDF dataset can be represented as a graph where subjects and objects are vertices and triples are labeled edges.

\begin{definition}\label{def:graph} \textbf{(RDF Graph)}
An RDF graph is denoted as $G=\{V,E,L,f\}$, where $V$ is a set of vertices that correspond to all subjects and objects in the RDF data; $E \subseteq V \times V$ is a multiset of directed edges that correspond to all triples in the RDF data; $L$ is a set of edge labels; and $f:E\to L$ is a label mapping, where for each edge $e \in E$, its edge label $f$($e$) is its corresponding property. \qed
\end{definition}

A SPARQL query can similarly be represented as a query graph $Q$. In this study, we focus on BGP queries as they are foundational to SPARQL, and we consider techniques for handling them.

\begin{definition}\label{def:query}\textbf{(SPARQL BGP Query)}
A \emph{SPARQL BGP query} is denoted as $Q=\{V^{Q},E^{Q}, L^{Q},f^{Q}\}$, where $V^{Q} \subseteq V\cup V_{Var}$ is a set of vertices, where $V$ denotes all vertices in the RDF graph $G$, and $V_{Var}$ is a set of variables; $E^{Q} \subseteq V^{Q} \times V^{Q}$ is a multiset of edges in $Q$; $L^{Q}\subseteq L\cup L_{Var}$ is a set of edge labels, where $L$ denotes all properties in the RDF graph $G$, and $L_{Var}$ is a set of variables for properties; and $f^Q:E^{Q}\to L^{Q}$ is a mapping, where each edge $e$ in $E^Q$ either has an edge label $f^Q(e)$ in $L$ (i.e., property) or the edge label is a variable. \qed
\end{definition}

We assume query $Q$ is a weakly connected directed graph; otherwise, each connected component is considered separately. 

\begin{example}\textbf{(RDF and SPARQL)}
Fig. \ref{fig:ExampleGraphWorkload} illustrates an RDF graph from DBpedia \cite{Dataset:DBpedia} and three SPARQL BGP queries from QALD benchmark \cite{Dataset:QALD}. 
We include the SPARQL syntax of the first query to highlight the mapping between its structure and the query graph.
\end{example}

Answering a SPARQL query is equivalent to finding all subgraphs of $G$ that are homomorphic to $Q$. The subgraphs of $G$ homomorphic to $Q$ are called \emph{matches} of $Q$ over $G$.

\begin{definition}\label{def:sparqlmatch}\textbf{(SPARQL BGP Match)}
Consider an RDF graph $G$ and a query graph $Q$ with vertices $\{v_1,...,v_n\}$. A subgraph of $G$ with vertices $\{u_1, ...,u_x\}$ ($x\leq n$) is a \emph{match} of $Q$ if and only if there exists a \emph{function} $\mu$ from $\{v_1,...,v_n\}$ to $\{u_1,...,u_x\}$ such that:
  1) if $v_i$ is not a variable, $\mu(v_i)$ and $v_i$ have the same URI or literal value ($1\leq i \leq n$);
  2) if $v_i$ is a variable, there is no constraint over $\mu(v_i)$ except that $\mu(v_i)\in \{u_1,...,u_x\}$;
  3) if there exists an edge $\overrightarrow{v_iv_j}$ in $Q$, there also exists an edge $\overrightarrow{\mu{(v_i)}\mu{(v_j)}}$ in $G$;
  4) there must exist an \emph{injective function} from edge labels in $f^{Q}(\overrightarrow{v_iv_j})$ to edge labels in $f(\overrightarrow{\mu{(v_i)}\mu{(v_j)}})$. Note that a variable edge label in $f^{Q}$($\overrightarrow{v_iv_j}$) can match any edge label in $f(\overrightarrow{\mu{(v_i)}\mu{(v_j)}})$.
For $Q$, its matches over an RDF graph $G$ is denoted as $MS(Q,G)$. \qed
\end{definition}

\begin{table}[]
\small
\centering
\caption{Notations}
\resizebox{\columnwidth}{!}{%
\begin{tabular}{ll}
\hline
 \textbf{Notation}& \textbf{Description}\\
 \hline
  $\mathcal{N}$ &  Set of $N$ end users  \\
  $\mathcal{K}$ &  Set of $K$ edge servers  \\
  $\mathbb{D}$ &  Query Assignment decision strategy  \\
  $\mathbb{F}$ &  Computing resource allocation strategy  \\
  $p$ &  Pattern corresponding a query in the workload  \\
  $Q_n$ &  Query task of $EU_n$  \\
  $c_n$ &  Workload of query task $Q_n$  \\
  $w_n$ &  Output data of query task $Q_n$  \\
  $e_{n,k}$ &  Query executable ability of $ES_k$ for query task $Q_n$  \\
  $r^{n,c}$ &  Downlink data rate from Cloud to $EU_n$   \\
  $r^{n,k}$ &  Downlink data rate from $ES_k$ to $EU_n$  \\
  $F_k$ &  Computing resource of $ES_k$  \\
  $f_{n,k}$ &  Computing resource that $ES_k$ allocates to query task $Q_n$ \\
  $\mathcal{N}_k$ &  Set of EUs that assign their query tasks to the edge server $k$ \\
  $\mathcal{N}_{es}$ &  Set of EUs that assign their query tasks to the edge servers \\
  $\mathcal{N}_c$ &  Set of EUs that assign their query tasks to the cloud \\
 \hline
  \end{tabular}%
  }
\label{table:notation}
\end{table}
\setlength{\textfloatsep}{0pt}


\subsection{Edge-Cloud System}
Our edge-cloud system consists of $K$ edge servers (ESs) and $N$ end users (EUs), where each EU may be associated with multiple ESs. The purpose of edge servers is to efficiently process and serve data to the end users, reducing latency and enhancing user experience.
The cloud stores the complete RDF graph $G$, while each ES $k$ stores a subgraph $G_k \subset G$. This means that the information available to each edge server is limited to its stored subgraph. 

EUs submit queries to the system, which then determines whether a query  can be processed by the associated ESs or must be handled by the cloud. The decision is based on the nature of the query and the subgraphs stored in the edge servers. 
If no ES can process a query, it is assigned to the cloud; otherwise, a scheduling decision is made (Section \ref{sec:ProblemSolution}) on whether to execute it at an ES or the cloud.

In this paper, we assume that EUs do not store any subgraphs. This assumption is based on the fact that providing a functional SPARQL query service requires access to a great portion of the RDF graph, which usually surpasses the capabilities of an EU. Although caching the results of specific queries at EUs is a possibility, such caching strategies are orthogonal to our focus on optimizing query task scheduling through offloading, and are not considered in this paper. Thus, EUs rely on edge servers or the cloud to process their queries. 

\section{Problem Formulation}\label{sec:Problem}
In this section, we first clarify the optimization objective and provide a detailed description of the system model. The system model consists of four key components: the data model, query model, communication model, and computation model. Based on these components, we formulate the joint optimization problem of query assignment and computational resource allocation within the edge-cloud environment.

\subsection{Optimization Objective}
In edge-cloud system, given a set $\mathcal{N}$ of end users and a set $\mathcal{K}$ of edge servers, our goal is to allocate resources and assign each query to the most suitable server to enhance overall system performance. The primary objective is to minimize the total response time of queries submitted by end users. 

Each query has two possible execution options: processing the query at the cloud or at one of the associated edge servers.  
Let $\mathbb{D}=\{D_{n,1},...,D_{n,k}\}, \forall n\in \mathcal{N}$ denote the K-dimensional assignment decision vector of $EU_n$ to $ES_k$, where $D_{n,k}\in \{0,1\}$. Consequently, we have:

{\begin{equation}
    D_{n,k} = \left\{
    \begin{array}{rcl}
       1  & \text{if } EU_n \text{ assigns to the } ES_k,\forall k \in \mathcal{K},\\
       0  &  \text{otherwise.}
    \end{array}
    \right.
\end{equation}}

Because each edge server stores only a subset of the complete RDF graph, an end user associated with a particular edge server cannot always have their query processed there. Thus, in addition to the K-dimensional assignment decision vectors, we define $\mathbb{E} = \{e_{n,1}, \dots, e_{n,k}\}, \forall n \in \mathcal{N}$, as the query executable vector. For query $Q_n$, $e_{n,k}\in \{0,1\}$ denotes whether the query submitted by $EU_n$ can be handled by $ES_k$. Specifically, $e_{n,k} = 1$ if the query can be executed on $ES_k$:

{
{\begin{equation} \label{eq:qev}
e_{n,k}=\left\{  
    \begin{array}{rcl}
       1  & \text{if } ES_k \text{ can  handle   query }Q_n \text{ in } EU_n, \\
       0  &  \text{otherwise.}
    \end{array} 
\right.  
\end{equation} }}

Based on the above definitions, each query $Q$ must be processed either by the cloud or by one of its associated edge servers. Specifically, if $\sum_{k\in K}D_{n,k}\times e_{n,k} = 0$, the query is handled by the cloud. Additionally, for all $n \in \mathcal{N}$, it holds that $\sum_{k\in \mathcal{K}}D_{n,k} \times e_{n,k} \le 1$.

{Let $O_{total}$ denote the total execution cost for all queries.
Each query $Q_n$ is executed either on an edge server $k$, with cost $O^{n,k}_e$, or on the cloud, with cost $Q^n_c$. Based on the definitions of $\mathbb{D}$ and $\mathbb{E}$, the total cost is expressed as:}

{{\begin{equation}\label{eq:totalCostEq}
   O_{total}=\sum_{n\in \mathcal{N}} \sum_{k\in \mathcal{K}} D_{n,k}\times e_{n,k}\times  O^{n,k}_e + \sum_{n\in \mathcal{N}} (1-\sum_{k\in \mathcal{K}} D_{n,k}\times e_{n,k})\times Q^n_c.
\end{equation}
}}

\subsection{System Model}
{The execution cost model characterizes the computation and communication overhead at the edges and cloud, which are the dominant contributors to latency.
The model follows a commonly adopted modeling paradigm in mobile edge computing studies \cite{TMC21:MultiTaskLearning,8454442,9340353,8314696}.
To highlight the key factors and avoid unnecessary model complexity, the model does not explicitly consider system factors. Although these factors may affect absolute latency, they typically do not alter the relative performance trends under the same system configuration. }

Since assignment and allocation decisions are determined by the underlying system characteristics, we define the system model as follows.

\subsubsection{Data Model}\label{sec:DataModel}
In our system, the data model comprises not only the global RDF graph stored on the cloud but also a set of subgraphs stored on edge servers. After discussing the optimization objective, we recognize that data localization is one of the key factors in solving this issue. As mentioned before, many real frequently submitted queries share the same graph structures. We use these recurring graph patterns as the basis for defining the subgraphs stored in the edge servers, which can be used for supporting efficient data localization. 

Generally, a pattern \emph{generalizes} a specific query and is an abstract query template that captures only the structural shape of the query, while a query is an \emph{instance} of a pattern.

\begin{definition} \label{def:patterndef} \textbf{(Pattern)} A query $Q$ in the workload corresponds to a \emph{pattern} $p$, where $p$ is generated by replacing all constants (literal and URIs) at subjects and objects with variables in $Q$. \qed
\end{definition}

\nop{
A pattern-induced subgraph consist of subgraphs extracted from graph data that are homomorphic to a specific pattern. In other words, it captures all data fragments within the graph that match the structural shape of the pattern, thereby serving as storage units for edge servers. Therefore, we define \emph{pattern-induced subgraph} as follows:}

{To facilitate efficient data localization, we introduce \emph{pattern-induced subgraphs}. The standard semantics of SPARQL BGP matching is based on subgraph homomorphism; hence, using homomorphism for constructing pattern-induced subgraphs is necessary to preserve result completeness.}
In other words, it captures all data fragments within the graph that match the structural shape of the pattern, thereby serving as storage units for edge servers. Therefore, we define \emph{pattern-induced subgraph} as follows:

\begin{definition} \label{def:PatternInducedSubgraph} \textbf{(Pattern-Induced Subgraph)}
Given a set of patterns $P$, the \emph{pattern-induced subgraph} of $G$ by $P$, denoted as $G[P]$, is the subgraph consisting of both the vertices and edges involved in at least one match of a pattern $p\in P$ in $G$. A match $\mu \in MS(p)$ is a homomorphism from pattern $p$ to a subgraph of $G$,  where $V(\mu)$ and $E(\mu)$ denote the vertices and edges, respectively, in this match. Thus: $G[P] = ( \bigcup\nolimits_{p \in P} {\bigcup\nolimits_{\mu  \in MS(p)} {V(\mu )} }, \bigcup\nolimits_{p \in P} {\bigcup\nolimits_{\mu  \in MS(p)} {E(\mu )} }) $, where $MS(p)$ is the set of all matches of pattern $p$ in $G$. \qed
\end{definition}


It is worth noting that pattern-induced subgraphs may overlap, which means that subgraphs matched by different patterns in the data graph may share common vertices and edges. Using the pattern-induced subgraphs stored on an edge server, data localization becomes simpler and more efficient. 

{
However, the storage capacity of edge servers is inherently limited, making it impractical to deploy all pattern-induced subgraphs. 
To avoid inefficient storage use, edge servers adopt a storage-aware selection mechanism when deploying pattern-induced subgraphs. Specifically, the pattern selection process is abstracted as a knapsack problem, where the access frequency of a pattern reflects its benefit, and the size of its subgraph represents the storage cost. The system employs a lightweight greedy heuristic to prioritize patterns that offer high benefit under limited storage budgets. The mechanism only influences the execution location and performance.
}

When end users submit queries, the system first transmits the query to the cloud for centralized scheduling. The critical step in scheduling is to determine whether a given query can be executed directly on a specific edge server, based on the patterns deployed on that server.
In contrast to the homomorphism used for constructing pattern-induced subgraphs, query executability is determined based on graph isomorphism. A query is executable if its query graph is isomorphic to one of the patterns whose induced subgraphs are stored on the edge server. The distinction between homomorphism in construction and isomorphism for execution is essential, because two graphs that are each homomorphic to a third graph need not be homomorphic to one another \cite{DiscreteMathematics:HomomorphismsProperties}. Thus, even if a query is homomorphic to a pattern, it does not follow that all matches of the query are contained within the matches of that pattern. 
We further illustrate this point using the example in Fig. \ref{fig:ExampleHomomorphismIsomorphism}. The graph $K_2$ is homomorphic to $K_3$, yet the set of subgraphs induced by all matches of $K_3$ does not necessarily include all matches of $K_2$.

\begin{figure}[]
\centering
\subfigure[{K2}]{%
        \centering
		\includegraphics[width=0.45\columnwidth]{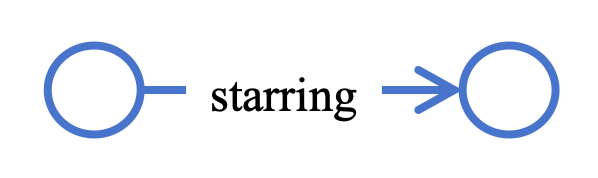}
       \label{fig:K2}%
       }
   \subfigure[K3]{%
        \centering
		\includegraphics[width=0.45\columnwidth]{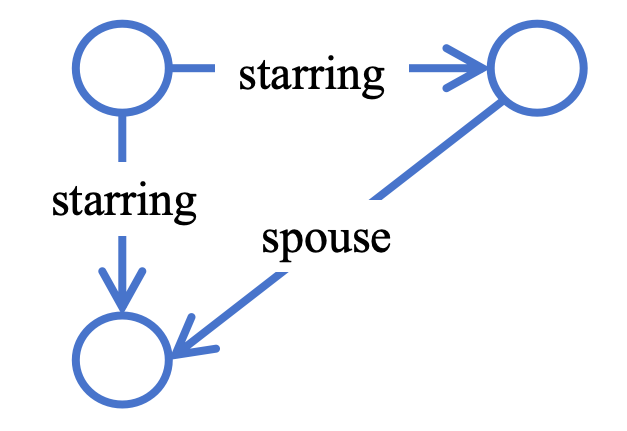}
       \label{fig:K3}%
       }
\caption{Example for homomorphism and isomorphism}%
 \label{fig:ExampleHomomorphismIsomorphism}
\end{figure}

This graph isomorphism check determines the values of the query executable vector $\mathbb{E}$ in Eq. (\ref{eq:qev}). Formally, for a query $Q_n$ and an $ES_k$, we set $e_{n,k}=1$ if $Q_n$ is isomorphic to any pattern deployed on $ES_k$; otherwise, $e_{n,k}=0$.
This binary matrix $\mathbb{E}$ encodes the compatibility and serves as the foundation for subsequent scheduling.

To efficiently construct the query executable vector $\mathbb{E}$, we design a lightweight indexing mechanism for the patterns stored on edge servers.
Specifically, each pattern is first transformed into its minimal DFS code, which serves as a canonical sequence representation of the graph structure \cite{SIGMOD04:gIndex}. 
The minimum DFS code selects the lexicographically smallest DFS traversal sequence among all valid ones, ensuring that two patterns share the same code if and only if they are isomorphic. Although computing the minimum DFS code can be expensive in theory, the overhead is negligible in our setting since patterns are small (typically containing fewer than 10 triples), which limits the search space and results in negligible overhead in practice.
The resulting canonical DFS codes are then hashed and indexed using a hash table, When a query arrives, it is encoded in the same manner, allowing the system to quickly identify matching patterns. This indexing mechanism avoids expensive subgraph matching at runtime, thereby reducing computation overhead and query latency.

\begin{example}For the workload in Fig. \ref{fig:FirstExampleQuery}, the patterns corresponding to the three queries are shown in Fig. \ref{fig:ExamplePatterns}. 
In this example, $G_1$ is a subgraph induced by $p_1$ and $p_2$ in Fig. \ref{fig:ExampleGraph} involving the blue and gray vertices, while $G_2$ is a subgraph induced by $p_2$ and $p_3$ involving the gray and yellow vertices. In the example edge-cloud system, $G_1$ and $G_2$ are stored at edge servers $ES_1$ and $ES_2$, respectively.
\end{example}

{To handle evolving query workloads, we introduce a dynamic update mechanism. 
Specifically, the system continuously monitors the access frequency of each query pattern. When certain patterns appear frequently in the cloud but are not present on the edge, the system dynamically adds the corresponding pattern-induced subgraphs to the edge servers. To control storage overhead, the system also evicts inactive or outdated patterns based on frequency statistics. To ensure query processing efficiency, this update process is implemented as an asynchronous background task, decoupled from the query response flow, and thus does not affect online query latency.}

\begin{figure}
\centering
        \centering
		\includegraphics[width=\columnwidth]{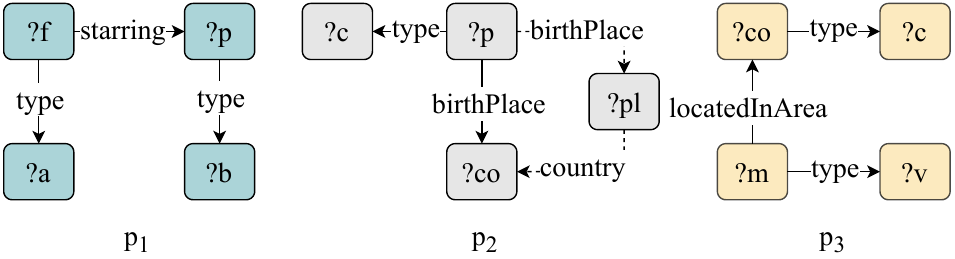}
{\caption{Example Patterns}}%
 \label{fig:ExamplePatterns}
\end{figure}



\subsubsection{Query Model}\label{queryModel}
We assume that each EU submits SPARQL queries one-at-a time. 
A query is characterized as a two-tuple of parameters, i.e., $Q_n=(c_n, w_n)$, where $c_n$ [cycles] denotes the amount of computation to execute $Q_n$ (i.e., the total number of CPU cycles required to process it), and $w_n$ [bits] is the result size of $Q_n$ in number of bits.

The cost estimation of a SPARQL query is a well-studied problem. In this paper, we directly adopt existing methods \cite{10.1145/1367497.1367578,10.1145/1559845.1559911} to estimate the amount of computation and result size. These techniques are orthogonal to our focus, and our formulation can accommodate alternative cost functions.


As discussed in Section \ref{sec:DataModel}, we use a hash table to efficiently locate patterns that are homomorphic to a given query. Then, $e_{n,k} = 1$ if and only if the pattern-induced subgraph of the query is stored on $ES_k$. 

\begin{example}
In the edge-cloud system shown in Fig. \ref{fig:system_overview}, we assume that a query homomorphic to $p_1$ is submitted by end user $EU_3$. Although $EU_3$ is associated with two edge servers, $ES_1$ and $ES_2$, after using the pattern-induced subgraphs stored in $ES_1$ and $ES_2$ for data localization, we can determine that the query can be processed by $ES_1$ but not by $ES_2$. Therefore, $e_{3,1} = 1$ and $e_{3,2} = 0$.
\end{example} 

\subsubsection{Communication Model}

In the model, edge servers can communicate with the cloud, but not among themselves. Therefore, each submitted query is either executed on an edge server or assigned to the cloud for precessing, and the results are subsequently returned to the end user.

The transmission rate for transmitting the result of query $Q_n$ from the cloud to EUs is assumed to be fixed \cite{8318578,9861697}, and is denoted as $r^{n,c}$.
{When the query is executed at a ES, the transmission rate to the EU depends on the wireless access link.
We consider the MEC architecture based on orthogonal frequency division multiple-access (OFDMA) \cite{DBLP:conf/icc/FengZDCY18,book:MobileBroadband}, which avoids mutual interference among users by allocating orthogonal radio resources.
Each EU is allocated an identical bandwidth $B$. The transmission power of edge server $k$ and the channel gain between edge server $k$ and end user $n$ are denoted by $tp_k$ and $h_{k,n}$, respectively. Accordingly, the transmission rate from $ES_k$ to $EU_n$ is given by:}

{{\begin{equation}
   r^{n,k}=B\times \log_2(1+\frac{tp_k\times h_{k,n}}{\sigma^2}),
\end{equation}}}
where $\sigma^2$ is the background noise.

\subsubsection{Computation Model}


\paragraph{Edge Computing}
The computational capability of each edge server is inherently limited. For $ES_k$, its total computational resource is denoted as $F_k$, expressed in terms of the number of CPU cycles per second. Let $F_n = \{f_{n,1}, \dots, f_{n,K}\}$ denote the $K$-dimensional vector of computational resource (in CPU cycles per second) allocated to $EU_n$ by $ES_k$, representing the distribution of resources across edge servers.

The allocation result, denoted as $\mathbb{F} = \{F_{1}, \dots, F_{N}\}$, represents the allocation of computational resources from edge servers to the various query tasks submitted by end users. In order to ensure the validity of the allocation, it must satisfy the constraints $0\le f_{n,k}\le F_k$ and $\sum_{n=1}^{N} f_{n,k}\le F_k$.

The cost associated with processing a query can be broken down into three components: 
\begin{enumerate}[left=0pt]
\item \textbf{Query Uploading.} The cost of uploading queries is negligible and is excluded from the total cost calculation.
\item \textbf{Query Execution.} The cost of executing queries is defined as $\frac{c_n}{f_{n,k}}$, representing the ratio of the computational requirement $c_n$ for a query to the allocated computational resources $f_{n,k}$ provided by $ES_k$ to $EU_n$.
\item \textbf{Result Forwarding.} The cost of forwarding results to end users is defined as $\frac{w_n}{r^{n,k}}$, where $w_n$ denotes the size of the query result and $r^{n,k}$ indicates the achieved data rate of the wireless link between $EU_n$ and $ES_k$. 
\end{enumerate}

Hence, the total cost of executing query $Q_n$ at the $ES_k$, denoted as $O_e^{n,k}$, is expressed as:
    $O_e^{n,k}=\frac{c_n}{f_{n,k}} + \frac{w_n}{r^{n,k}}$.
By integrating the execution time and result forwarding time, the system can efficiently determine the total cost of processing a query on the edge servers. This information is crucial for optimizing resource allocation and selecting the most suitable computing node for each query, ultimately improving system performance and user experience.

\paragraph{Cloud Computing}
Recall that the cloud has unlimited computational capability in the model, and the computation delay is negligible compared to that of transmission. As in edge computing, the cost of uploading queries to the cloud is also negligible.
Thus, the cost of executing query $Q_n$ in the cloud, $O_c^n$, is dominated by the time required to forward results to the user, given by $O_c^n = \frac{w_n}{r^{n,c}}$. Where $w_n$ is the size of the query result and $r^{n,c}$ indicates the achieved data rate of the wireless link between $EU_n$ and the cloud.


By comparing this cost with that of processing the query on the edge servers, the system can effectively determine the most efficient computation node for each query, leading to optimized performance and an enhanced user experience.

\subsection{Problem Formulation}
Based on the above discussions, given a set $\mathcal{N}$ of end users, a set $\mathcal{K}$ of edge servers and a set of queries, the total cost of all queries defined in Eq. (\ref{eq:totalCostEq}) can be further expressed as:

{\begin{equation} \label{eqn:O_total}
\begin{aligned}
    O_{total}(\mathbb{D}, \mathbb{F})=\sum_{n\in \mathcal{N}}(\sum_{k\in \mathcal{K}}D_{n,k}\times e_{n,k} \times (\frac{c_n}{f_{n,k}} + \frac{w_n}{r^{n,k}}) \\
    + (1-\sum_{k\in \mathcal{K}} D_{n,k}\times e_{n,k})\times \frac{w_n}{r^{n,c}})).
\end{aligned}
\end{equation}}

Having provided the textual definition of the optimization objective earlier, we now formalize it using the following mathematical representation.

\begin{definition}\label{def:problemdef}\textbf{(Joint Formulation  of  Query Assignment and Computational Resource Allocation)} Given the sets of EUs and ESs in the system, the joint optimization problem is formulated as:

{\begin{equation}
\begin{aligned}
    & \min_{\mathbb{D},\mathbb{F}} O_{total} (\mathbb{D}, \mathbb{F}) \\
    \textbf{s.t.} \quad
    & C_1:  D_{n,k} \in \{0,1\} \\
    & C_2: \sum_{k\in \mathcal{K}}D_{n,k}\times e_{n,k} \le 1 \\
    & C_3:  0 \le f_{n,k}\le F_k  \\
    & C_4: \sum_{n\in \mathcal{N}} f_{n,k}\le F_k\\ \label{opt:p1}
\end{aligned}
\end{equation}}

with $O_{total}$ defined in Eq. \eqref{eqn:O_total}.
\end{definition}

The solution to this problem involves finding the optimal assignment decision vector $\mathbb{D}$ and the optimal allocation vector of computational resources $\mathbb{F}$. The constraint $C_1$ shows that $EU_n$ can only choose to execute query $Q_n$ in the cloud or at an edge server $ES_k$. $C_2$ specifies that a query needs to be handled by only one edge server or by the cloud. $C_3$ and $C_4$ ensure that the computational resources allocated to all end users whose queries are assigned to $ES_k$ should not exceed the resources at $ES_k$.

\begin{table}[]
\centering
\caption{Example Solution}
\begin{tabular}{c c @{\hspace{1cm}} c}
$\mathbb{D}=$ & 
$\begin{bmatrix}
1 & 0 \\
1 & 0 \\
0 & 1 \\
0 & 0 \\
0 & 1
\end{bmatrix}$ 
& 
$\mathbb{F}=$ \quad 
$\begin{bmatrix}
44.4 & 0 \\
35.6 & 0 \\
0 & 42.5 \\
0 & 0 \\
0 & 37.5
\end{bmatrix}$
\end{tabular}
 \label{table:example_solution}
\end{table}

\begin{figure}[]
    \centering
      \includegraphics[width=0.85\columnwidth]{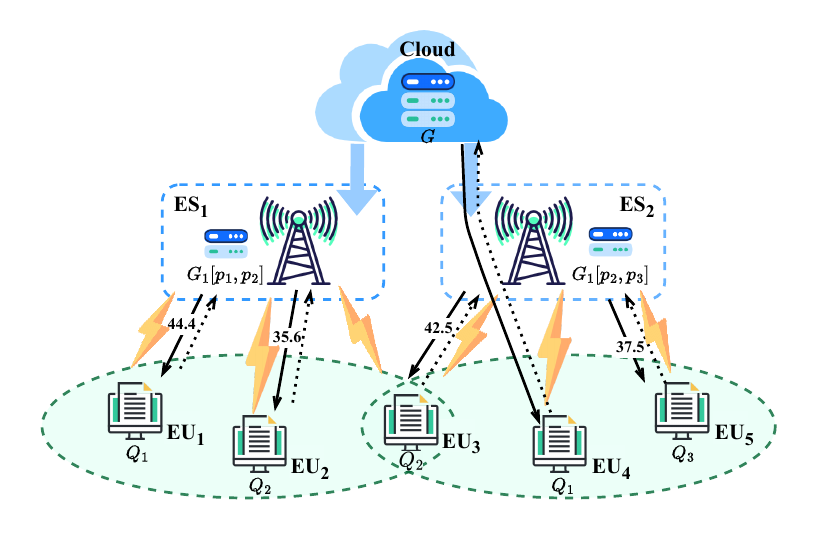}
\caption{Example Solution}%
 \label{fig:example_solution}
\end{figure}

\begin{example}
Table \ref{table:example_solution} and Fig. \ref{fig:example_solution} give a possible solution for the edge-cloud computing system. Here, we assume that the computational resources $F_1$ and $F_2$ of edge servers $ES_1$ and $ES_2$ are both $80$. Each EU submits a query, and the query shown beside each EU represents the one submitted by that user. Queries $Q_1$, $Q_2$, and $Q_3$ are part of the example workload shown in the Fig. \ref{fig:FirstExampleQuery}, which are homomorphic to the patterns in Fig. \ref{fig:ExamplePatterns}.

In the example, edge server $ES_1$ processes queries from $EU_1$ and $EU_2$, while $ES_2$ handles queries from $EU_3$ and $EU_5$. The remaining query from $EU_4$ is assigned to the cloud.
Hence, the fourth row of both $\mathbb{D}_{n,k}$ and $\mathbb{F}_{n,k}$ is zeros.
$ES_1$ allocates computational resources of $44.4$ and $35.6$ to these queries, with $\mathbb{D}_{1,1}=\mathbb{D}_{2,1}=1$, $\mathbb{F}_{1,1}=44.4$, and $\mathbb{F}_{2,1}=35.6$.
\end{example}




\section{Problem Transformation And Solution}\label{sec:ProblemSolution}
In this section, we first decompose the problem into two subproblems: the query assignment decision and the computational resource allocation problems. Next, we outline our approach to solve the optimization problem by addressing the computational resource allocation problem and using its solutions to derive the query assignment decision problem.

\subsection{Problem Analysis}
The optimization problem of joint query assignment decision and computational resource allocation is a mixed-integer nonlinear programming problem (MINLP), which is a NP-hard problem \cite{DBLP:journals/actanum/BelottiKLLLM13}. Finding the optimal solution usually requires exponential time complexity. Intuitively, the optimization problem can be solved by exploring all combinations of the assignment decision vector ($\mathbb{D}$) and computational resource allocation ($\mathbb{F}$). However, since $\mathbb{D}$ is a binary decision vector, the search space grows exponentially with its dimension, making the problem intractable for large-scale instances.

The structure of the objective function and constraints (i.e., constraints $C_1-C_4$ in \eqref{opt:p1}) allows the original problem to be decomposed into two subproblems. Specifically, by fixing the binary assignment decision variable $\mathbb{D}$, the decomposition can be carried out using the Tammer method \cite{Tammer1987TheAO}. The original optimization problem can thus be reformulated as:

{\begin{equation}
\begin{aligned}
    &\min_{\mathbb{D}} (\min_{\mathbb{F}} O_{total} (\mathbb{D}, \mathbb{F})) \label{opt:equivalent_p1} \\
    &\textbf{s.t. } \text{constraints $C_1$-$C_4$ in Eq. \eqref{opt:p1}}
\end{aligned}
\end{equation}}

Note that the constraints $C_1$ and $C_2$ on the assignment decision, $\mathbb{D}$, and the constraints $C_3$ and $C_4$ on the computational resource allocation, $\mathbb{F}$, are decoupled from each other. The equivalent problem in \eqref{opt:equivalent_p1} can be decomposed into two subproblems: the query assignment decision (QAD) subproblem, which minimizes cost, and the computational resource allocation (CRA) subproblem for a fixed assignment:

{\begin{equation}
\begin{aligned}
    & \min_{\mathbb{D}} O_{total}^* (\mathbb{D}) \label{opt:qod}  \\
    & \textbf{s.t. } \text{constraints $C_1$, $C_2$ in Eq. \eqref{opt:p1}}
\end{aligned}
\end{equation}}

where $O_{total}^* (\mathbb{D})$ denotes the optimal value of the computational resource allocation problem, defined as follows:

{\begin{equation}
\begin{aligned}
    & O_{total}^* (\mathbb{D}) = \min_{\mathbb{F}} O_{total} (\mathbb{D}, \mathbb{F}) \label{opt:cra} \\ & \textbf{s.t. }  \text{constraints $C_3$, $C_4$ in Eq. \eqref{opt:p1}}
\end{aligned}
\end{equation}}

Note that the decomposition from the original problem \eqref{opt:p1} to subproblems \eqref{opt:qod} and \eqref{opt:cra} does not change the optimality of the solution \cite{TVT19:JointTaskOffloading}. In  this paper, we will present our solutions to both the CRA problem and the QAD problem so as to finally obtain the solution to the original problem.


Firstly, we denote the set of end users (EUs) that assign the query to the $ES_k$ as $\mathcal{N}_k = \{ n \in N| D_{n,k} \times e_{n,k} = 1\}$. Consequently, $\mathcal{N}_{es} = \bigcup_{k \in \mathcal{K}} \mathcal{N}_k$ represents the set of EUs assigning the query to edge server, while $\mathcal{N}_{c} = \mathcal{N} - \mathcal{N}_{es}$ represents the set of EUs assigning the query to the cloud.

Using the definitions of the edge and cloud execution costs, the objective function in Eq. \eqref{eqn:O_total} can be rewritten as:

{\begin{equation} \label{eqn:O_total_rewritten}
    O_{total} = \sum_{k\in\mathcal{K}}\sum_{n\in\mathcal{N}_k}\frac{c_n}{f_{n, k}} + \sum_{k\in\mathcal{K}}\sum_{n\in\mathcal{N}_k}\frac{w_n}{r^{n, k}}+ \sum_{n\in\mathcal{N}_{c}}\frac{w_n}{r^{n, c}}
\end{equation}}

\subsection{Computational Resource Allocation (CRA)}
We observe that the second and the third term on the right hand side of \eqref{eqn:O_total_rewritten} is constant for a particular assignment decision, while the first term can be seen as the total computational overhead of all assigned queries. Hence, given a feasible query assignment decision $\mathbb{D}$ that satisfies constraints $C_1$ and $C_2$ in Eq. \eqref{opt:p1}, we can recast \eqref{opt:cra} as the problem of minimizing the total computational overhead:

{\begin{equation}
\begin{aligned}
    & \min_{\mathbb{F}} O_{total\_calc}(\mathbb{D},\mathbb{F}) = \sum_{k\in\mathcal{K}}\sum_{n\in\mathcal{N}_k}\frac{c_n}{f_{n,k}} \label{opt:cra_calc}  \\ & \textbf{s.t. }  \text{constraints $C_3$, $C_4$ in Eq. \eqref{opt:p1}}
\end{aligned}
\end{equation}}

Notice that the constraints $C_3$ and $C_4$ in Eq. \eqref{opt:p1} are convex, and the Hessian matrix of the objective function in Eq. \eqref{opt:cra_calc} is diagonal with strictly positive elements, making it positive-definite \cite{TVT19:JointTaskOffloading}. Thus, the optimization problem in Eq. \eqref{opt:cra_calc} is convex and can be solved using the Karush-Kuhn-Tucker (KKT) conditions. 


By equating the gradient of the Lagrangian to zero, we derive the optimal computational resource allocation $f_{n,k}^*$ and its corresponding optimal objective function $O_{total\_calc}(\mathbb{D}, \mathbb{F}^*)$ for solving Eq. \eqref{opt:cra_calc}, which are calculated as:

{\begin{equation}
    f_{n,k}^*=\frac{F_k\sqrt{c_n}}{\sum_{n\in\mathcal{N}_k} \sqrt{c_n}}, \forall k \in\mathcal{K}, n \in\mathcal{N}_k, \label{eqn:f_nk}
\end{equation}
\begin{equation}
    O_{total\_calc}(\mathbb{D}, \mathbb{F}^*)=\sum_{k\in\mathcal{K}}\frac{1}{F_k}(\sum_{n\in\mathcal{N}_k}\sqrt{c_n})^2. \label{eqn:o_calc}
\end{equation}}

\begin{example}
Given the feasible query assignment decision $\mathbb{D}$ in Table \ref{table:example_solution}, the sets of EUs assigned to each edge server are determined: $N_1=\{EU_1, EU_2\}$ and $N_2=\{EU_3, EU_5\}$. With these allocations, the optimal computing resource allocation for each $EU$ is computed using \eqref{eqn:f_nk}.
\end{example}


\subsection{Query Assignment Decision (QAD)}
In the previous section, for a given query assignment decision $\mathbb{D}$, we obtained the solution for the computing resource allocation. In particular, according to Eqs. \eqref{opt:cra}, \eqref{eqn:O_total_rewritten}, \eqref{opt:cra_calc} and \eqref{eqn:o_calc}, we have:

{\begin{equation} \label{eqn:o_total_1}
    O_{total}^* = O_{total\_calc} +\sum_{k\in\mathcal{K}}\sum_{n\in\mathcal{N}_k}\frac{w_n}{r^{n,k}}+ \sum_{n\in\mathcal{N}_{c}}\frac{w_n}{r^{n,c}}
\end{equation}}

where $O_{total\_calc}$ can be calculated using the closed-form expression in Eq. \eqref{eqn:o_calc}. Now, using Eq. \eqref{eqn:o_total_1}, we can rewrite the QAD problem in Eq. \eqref{opt:qod} as:

{\begin{equation}
\begin{aligned}
    & \min_{\mathbb{D}} \sum_{k\in\mathcal{K}}\frac{1}{F_k}(\sum_{n\in\mathcal{N}_k}\sqrt{c_n})^2 +\sum_{k\in\mathcal{K}}\sum_{n\in\mathcal{N}_k}\frac{w_n}{r^{n,k}}+ \sum_{n\in\mathcal{N}_{c}}\frac{w_n}{r^{n,c}} \label{opt:qod_rewritten}\\
    \textbf{s.t.}  \
    & C_1: D_{n,k} \in \{0,1\};   \hspace{1em}
    C_2: \sum_{k\in \mathcal{K}}D_{n,k}\times e_{n,k} \le 1.
\end{aligned}
\end{equation}}

\subsection{Branch-and-Bound Algorithm}
In this subsection, we extend techniques in \cite{DBLP:journals/access/ZhaoTQN17} to introduce a branch-and-bound  algorithm to achieve the globally optimal solution for the problem defined in Eq. \eqref{opt:qod_rewritten}. 

We first build a search tree, where each node corresponds to a partial solution. This tree is generated using a depth-first strategy, with the root node representing the problem in Eq. \eqref{opt:qod_rewritten}. To effectively prune branches and search for an optimal strategy, it's essential to calculate upper ($\overline{O_{total}}$) and lower ($\underline{O_{total}}$) bounds for each partial solution. We select the unpruned leaf node with the maximal depth (i.e., the farthest from the root) and the lowest upper bound to continue branching, until all queries are considered.

\subsubsection{Bounding}
For a node corresponding to the problem in Eq. \eqref{opt:qod_rewritten}, we describe how to calculate its upper and lower bounds. First, we relax the integer variable (i.e. $\mathbb{D}$) to a continuous one. By relaxing the constraints, the problem is simplified as follows, referred to as Problem R-QAD:

{\begin{equation}
\begin{aligned}
    & \min_{\mathbb{D}} \sum_{k\in\mathcal{K}}\frac{1}{F_k}(\sum_{n\in\mathcal{N}_k}\sqrt{c_n})^2 +\sum_{k\in\mathcal{K}}\sum_{n\in\mathcal{N}_k}\frac{w_n}{r^{n,k}}+ \sum_{n\in\mathcal{N}_{c}}\frac{w_n}{r^{n,c}} \\
    \textbf{s.t.}  \
    & C_1: D_{n,k} \in [0,1];  \hspace{1em}  C_2: \sum_{k\in \mathcal{K}}D_{n,k}\times e_{n,k} \le 1.
\end{aligned}\label{opt:qod_lower}
\end{equation}}

The optimal of Eq. \eqref{opt:qod_lower}, $\underline{O_{total}}$, is a \emph{lower bound} of Eq. \eqref{opt:qod_rewritten}. Thus, we define $\underline{\mathbb{D}}$ as the optimal solution to this problem.

\begin{theorem}\label{theorem:LowerConvexOptimization}
Problem R-QAD is a convex optimization problem.
\end{theorem}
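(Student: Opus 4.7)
The plan is to verify convexity of Problem R-QAD by checking two standard ingredients: convexity of the feasible region, and convexity of the objective function when written as an explicit function of the (now continuous) decision variable $\mathbb{D}$.

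First I would rewrite the objective so that the dependence on $\mathbb{D}$ is explicit, by replacing the set $\mathcal{N}_k = \{n : D_{n,k}\cdot e_{n,k}=1\}$ with weighted sums over all $n\in\mathcal{N}$. Concretely, $\sum_{n\in\mathcal{N}_k}\sqrt{c_n}$ becomes $\sum_{n\in\mathcal{N}} D_{n,k}\,e_{n,k}\,\sqrt{c_n}$, the second term becomes $\sum_{k}\sum_{n} D_{n,k}\,e_{n,k}\,w_n/r^{n,k}$, and the cloud term becomes $\sum_{n}(1-\sum_{k} D_{n,k}\,e_{n,k})\,w_n/r^{n,c}$. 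This substitution extends the definition of the objective continuously to the relaxed domain $D_{n,k}\in[0,1]$, which is exactly what we need for R-QAD.

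Next I would show the feasible region is convex: constraint $C_1$ defines a box $[0,1]^{N\times K}$, and constraint $C_2$ is a family of linear inequalities $\sum_k e_{n,k} D_{n,k}\le 1$, one per $n$. Their intersection is a polytope and hence convex. Then, for the objective, I would argue term by term. The second and third terms are affine in $\mathbb{D}$, so they are convex (and concave). For the first term, fix $k$ and let $L_k(\mathbb{D})=\sum_{n}e_{n,k}\sqrt{c_n}\,D_{n,k}$; this is an affine function of $\mathbb{D}$. Since $x\mapsto x^2$ is convex on $\mathbb{R}$ and composition of a convex nondecreasing-on-the-range function with an affine map preserves convexity (equivalently, $L_k^2$ has Hessian $2\nabla L_k\nabla L_k^{\top}\succeq 0$), each summand $F_k^{-1}L_k^2$ is convex because $F_k>0$. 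The full first term is a nonnegative sum of convex functions, hence convex, and adding the affine remainder keeps it convex. Combining a convex objective with a convex feasible set yields a convex optimization problem.

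I do not anticipate a genuine obstacle here — the argument is a direct application of the operations that preserve convexity (affine composition, nonnegative weighted sum, intersection of convex sets). The only mildly subtle point is making sure the rewrite with $e_{n,k}$ coefficients faithfully represents the original objective once $D_{n,k}$ is relaxed to $[0,1]$, because the set $\mathcal{N}_k$ is only meaningful in the binary case; the affine rewrite above is the standard way to handle this and coincides with the original expression whenever $\mathbb{D}$ is integral.
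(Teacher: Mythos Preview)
Your proposal is correct and follows essentially the same approach as the paper: rewrite the objective explicitly in $\mathbb{D}$ using the $e_{n,k}$ weights, observe that the relaxed constraints $C_1$ and $C_2$ define a polytope, and verify convexity of the objective term by term (affine for the second and third terms, squared-affine for the first). Your justification that $L_k^2$ has Hessian $2\nabla L_k\nabla L_k^{\top}\succeq 0$ is in fact cleaner than the paper's remark that the Hessian entries are nonnegative (which does not by itself imply positive semidefiniteness, though it happens to hold here because the matrix is rank one).
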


\begin{proof}
{We examine the convexity of the objective function and constraints. The objective consists of three parts. the first term, $\sum_{k\in\mathcal{K}}\frac{1}{F_k}(\sum_{n\in\mathcal{N}}D_{n,k}\times e_{n,k}\sqrt{c_n})^2$, has a Hessian matrix with elements $\frac{2e_{n,k}e_{m,k}\sqrt{c_nc_m}}{F_k}$}. Since these values are non-negative, the matrix is positive semi-definite, indicating this part is a convex function. The second term, $\sum_{k\in\mathcal{K}}\sum_{n\in\mathcal{N}}D_{n,k}\times e_{n,k}\frac{w_n}{r^{n,k}}$, is linear, and the third term, $\sum_{k\in\mathcal{K}}\sum_{n\in\mathcal{N}}(1-D_{n,k}\times e_{n,k})\frac{w_n}{r^{n,c}}$, is affine. Since linear and affine functions are convex, the entire objective function is convex.
Meanwhile, $C_1$ is an interval constraint and $C_2$ is a linear inequality constraint; both define convex feasible sets. Thus, Problem R-QAD is a convex optimization problem.
\end{proof}

Given a specified $\underline{\mathbb{D}}$, we apply the following formula to obtain the specific value for the $\overline{\mathbb{D}}$:

{\begin{equation}
    \overline{\mathbb{D}} = \lbrack \ \overline{\mathbb{D}_{n,k}} \mid \ \overline{\mathbb{D}_{n,k}} = \left\{
    \begin{aligned}
        1,  \ \ \underline{\mathbb{D}_{n,k}} \geq 0.5 \\
        0,  \ \  \underline{\mathbb{D}_{n,k}} < 0.5 \\
    \end{aligned}
    \right. \ \ 
    \forall \ \underline{\mathbb{D}_{n,k}} \in \underline{\mathbb{D}} \ \rbrack \label{eqn:CalcDUpper}
\end{equation}}

We determine $\mathcal{N}_k(\forall k \in \mathcal{K})$ and $\mathcal{N}_{c}$ based on $\overline{\mathbb{D}}$ and calculate the optimal value, $\overline{O_{total}}$, which is a \emph{upper bound} of problem (\ref{opt:qod_rewritten}):

{\begin{equation}
    \overline{O_{total}} = \sum_{k\in\mathcal{K}}\frac{1}{F_k}(\sum_{n\in\mathcal{N}_k}\sqrt{c_n})^2 +\sum_{k\in\mathcal{K}}\sum_{n\in\mathcal{N}_k}\frac{w_n}{r^{n,k}}+ \sum_{n\in\mathcal{N}_{c}}\frac{w_n}{r^{n,c}} \label{eqn:qod_upper}
\end{equation}}

\begin{figure*}[t]
    \centering
      \includegraphics[width=1.1\columnwidth]{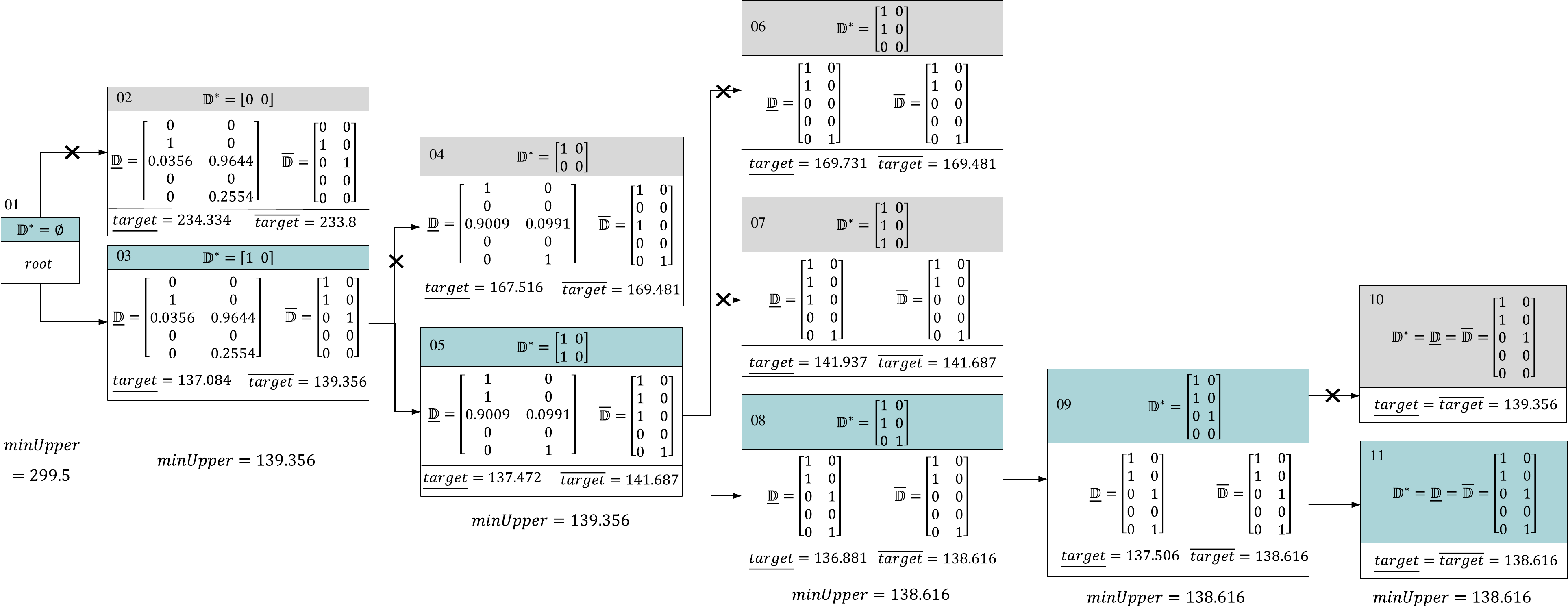}
\caption{Example search tree of the branch-and-bound method}%
 \label{fig:branchandbound}
\end{figure*}

\subsubsection{Branching}
In this context, we consider an iteration within the branching process, denoted as the $i^{th}$ branching. During this iteration, we select the node with maximal depth and the lowest upper bound from the remaining unpruned nodes. Once the node $s_{min}$ is determined, it can be used as the basis for branching into several subproblems. 

We define $\mathcal{N}_d$ as the EUs set for determined execution strategy, and define $O_D^*$ as the cost of $\mathcal{N}_d$:

{\begin{equation}
    O_D^* = \sum_{k\in\mathcal{K}}\frac{1}{F_k}(\sum_{n\in\mathcal{N}_d}\sqrt{c_n})^2 +\sum_{k\in\mathcal{K}}\sum_{n\in\mathcal{N}_d}\frac{w_n}{r^{n,k}}+\sum_{n\in\mathcal{N}_d}\frac{w_n}{r^{n,c}} \label{eqn:DeterminedCost}
\end{equation}}

Therefore, the subproblems are formulated as follows:

{\begin{equation}
\begin{aligned}
    & \min_{\mathbb{D}} O_D^* + \sum_{k\in\mathcal{K}}\frac{1}{F_k}(\sum_{n\in\mathcal{N}_k / \mathcal{N}_d}\sqrt{c_n})^2 +\sum_{k\in\mathcal{K}}\sum_{n\in\mathcal{N}_k / \mathcal{N}_d}\frac{w_n}{r^{n,k}} \\
    & \qquad\qquad +\sum_{n\in\mathcal{N}_{c} / \mathcal{N}_d}\frac{w_n}{r^{n,c}} \label{opt:BranchSubproblem}\\
    \textbf{s.t.}  \
    & C_1: D_{n,k} \in [0,1] \\
    & C_2: \sum_{k\in \mathcal{K}}D_{n,k}\times e_{n,k} \le 1 \\
    & C_3: D^*(|\mathcal{N}_d|) = {D^*(|\mathcal{N}_d-1|, \mathbf{T})}
\end{aligned}
\end{equation}

\noindent where $D^*(|\mathcal{N}_d|)$ denote the determined i-dimensional assignment decision vector of $\mathcal{N}_d$. $\mathbf{T} = (t_j)_{1\times|\mathcal{K}|}$ denote the generated branch ($t_j\in \{0,1\}$). Specifically, when $\sum_{j=1}^{|\mathcal{K}|} t_j=1$, the query is assigned to the edge server for processing. Otherwise, the query is assigned to the cloud.

\subsubsection{Branch-and-Bound Algorithm}
Algorithm 1 outlines the modified branch-and-bound algorithm. It begins by calculating the total cost of the cloud-only assignment (line 3). A priority queue is then initialized to store partial solutions, ranked by their upper bounds(line 5). The algorithm dequeues the partial solution with the smallest upper bound (line 7). If it is a complete solution, it is added to the set of complete solutions (lines 8-10). Otherwise, the partial solution is branched into several larger subproblems, which are processed by calculating their lower and upper bounds (lines 12-18). If a subproblem’s lower bound exceeds the current minimum upper bound, it is pruned (lines 19-21). The process continues until all partial solutions are processed, selecting the complete solution with the smallest total cost.

{\begin{algorithm} \label{alg:branchAndBound}
\caption{Branch-and-Bound Algorithm}

 \KwIn{System parameters $\{|\mathcal{N}|,|\mathcal{K}|,F_k\}$; Query parameters $\{e_{n,k}, c_n, w_n\}$}.
\KwOut{The assignment decision vector $\mathbb{D}$ and the vector of allocated computational resources $\mathbb{F}$}
Initialize the set of complete solutions $CS$;\\
Calculate the cost, $O_{all}^{cloud}$, for cloud-only assignment. \\
$minUpper \gets O_{all}^{cloud}$;\\
Initialize a partial solution and enqueue it into the priority queue $PQ$, where partial solutions are ranked by their upper bounds;\\
\While{$PQ \ne \emptyset$} {
    Dequeue the partial solution $s_{min}$ with the smallest upper bound from $PQ$; \\
    \If{$s_{min}$ is a complete solution} {
        Insert $s_{min}$ into $CS$;\\
        \textbf{continue};\\
    }
    Generate subproblems $S_{sub}$ by branching $s_{min}$ into several larger partial solutions using Eq. \eqref{opt:BranchSubproblem}; \\
    \For{each partial solution $s$ in $S_{sub}$} {
        Calculate $s.\underline{O_{total}}$ and $\underline{\mathbb{D}}$ using Eq. \eqref{opt:qod_lower}; \\
        Calculate $\overline{\mathbb{D}}$ based on $\underline{\mathbb{D}}$ using Eq. \eqref{eqn:CalcDUpper}; \\
        Calculate $s.\overline{O_{total}}$ based on $\overline{\mathbb{D}}$ using Eq. \eqref{eqn:qod_upper};\\
        Enqueue $s$ into $PQ$; \\
        \If{$s.\overline{O_{total}} < minUpper$} {
            $minUpper \gets s.\overline{O_{total}}$;\\
        }
    }
    \For{each partial solution $s^\prime$ in $PQ$} {
        \If{$s^\prime.\underline{O_{total}} > minUpper$} {
            Remove $s^\prime$ from $PQ$; \\
        }
    }  
}
Find the complete solution $s$ with the smallest total cost from $CS$ and set its assignment decision vector as $\mathbb{D}$;\\
Calculate $\mathbb{F}$ based on $\mathbb{D}$ using Eq. \eqref{eqn:f_nk};\\
\Return $\mathbb{D}$, $\mathbb{F}$;
\end{algorithm}}

{In the worst case, the branch-and-bound algorithm needs to explore the entire search space. Its worst-case complexity is determined by the depth of the search tree (the number of users $|\mathcal{N}|$) and the branching factor at each node (the number of server $|\mathcal{K}|+1$), i.e., $O(|\mathcal{K}|^{|\mathcal{N}|})$.
Although the worst-case complexity is exponential, in practice, most invalid branches are pruned early through the use of priority queue sorting and upper- and lower-bound pruning.}
In practice, the number of edge servers is typically limited, and most end users are just associated with only a few edge servers. Thus, we can often efficiently prune a large number of partial solutions during exploration, leading to significant runtime improvements.

\begin{example}
Fig. \ref{fig:branchandbound} illustrates the branch-and-bound search tree for the system in Fig. \ref{fig:example_solution}.
Initializing with the cloud solution, the algorithm first branches on $EU_1$ for $Q_1$, executable on $ES_1$ or in the cloud. This generates two child nodes, and solving these nodes produces an initial upper bound of $minUpper = 139.356$. It pruning node $02$ since its lower bound $234.334$ exceeds $minUpper$. At the second level, branching on $EU_2$ for $Q_2$, generating nodes $04$ and $05$, and prune node $04$. After further branching and pruning, node $11$ represents the optimal solution.
\end{example}

\section{Experiments}\label{sec:Experiments}
\subsection{Setting}
We construct an edge-cloud computing environment using one cloud server and multiple virtual machines representing edge servers.
The cloud server runs AWS Neptune \cite{ISWC2018:Neptune} to handle RDF data management, while the edge servers employ the open-source RDF store gStore \cite{VLDB2011:GStore,VLDBJ2014:gStore} to manage RDF graphs.
To evaluate scalability, the edge servers are configured with different hardware capacities. By default, four virtual machines are deployed as edge servers, each equipped with 1 CPU and 2 GB of memory. The environment also emulates 20 end users submitting SPARQL queries. In this setup, edge servers do not communicate directly with each other.

All codes are implemented in C++. The Gurobi Optimizer 10.0.1 is used to solve the linear programming problem via its C++ interface, which allows seamless integration with the main system. Additionally, network traffic between end users and servers is measured using iperf\footnote{https://iperf.fr/} and iftop\footnote{https://linux.die.net/man/8/iftop/}.

\textbf{Baseline.} Since no previous study addressed the problem, four baseline methods are designed for comparison. The first, Cloud-Only, processes all queries in the cloud. The second, Random, assigns queries randomly to either the cloud or capable edge servers. The third, Edge-First, prioritizes edge servers for query processing whenever they are capable, without considering the issue of computational resources allocation. The last, Greedy, allocates each query to the cloud or a capable edge server that minimizes execution cost.

{The problem studied in this work is a MINLP problem, and its objective function contains reciprocal nonlinear terms of the form $D_{n,k}\times e_{n,k} \times \frac{c_n}{f_{n,k}}$, which most mainstream MINLP solvers cannot handle directly. Even solvers that support such nonlinearities, such as Couenne \cite{belotti2009branching}, struggle to find feasible solutions within a reasonable time at practical scales. For instance, at the scale $|\mathcal{K}|=4, |\mathcal{N}|=20$, the solving time of Couenne exceeds 10 minutes. For these reasons, this study does not directly compare with general MINLP solvers. }

\begin{table}
\centering
\caption{Statistics of Datasets}
    \begin{tabular}{rrrr}
   \hline
    Dataset & $\#$Triples & $\#$Entities & $\#$Properties \\
     \hline
 WatDiv 100M  &  108,997,714   &  5,212,745 &  86  \\
 WatDiv 200M & 219,649,292  &  10,424,745  &  86  \\
 WatDiv 300M & 329,641,496  &  15,636,745  &  86  \\
 WatDiv 400M & 436,516,979  &  20,848,745  &  86  \\
 WatDiv 500M & 545,894,100  &  26,060,745  &  86  \\
  \hline
DBpedia  &  1,111,481,066   &  139,493,254  &  124,034  \\  
   \hline  \end{tabular}%
  \label{table:Datasets}
\end{table}

\textbf{Dataset.} To evaluate the performance of our methods,
we use a synthetic dataset, WatDiv \cite{DBLP:conf/semweb/AlucHOD14} and a real-world dataset, DBpedia \cite{Dataset:DBpedia}, whose statistics are shown in Table \ref{table:Datasets}.

WatDiv \cite{DBLP:conf/semweb/AlucHOD14} is a semantic data benchmark created by the Data Systems Group at the University of Waterloo. It consists of a data generator and a query generator that produce RDF datasets with diverse structures and a variety of query types.

DBpedia \cite{Dataset:DBpedia} extracts RDF data from Wikipedia’s extensive information. Multiple versions of DBpedia are available and regularly updated to reflect changes in Wikipedia content. In this paper, we extract data from version 3.4.

\textbf{Workload.} 
For DBpedia, we use queries from QALD \cite{Dataset:QALD} as the test workload. QALD is a benchmark designed for evaluating knowledge base question answering (KBQA) systems on DBpedia. It provides ground-truth SPARQL queries for benchmark natural language questions. We select a diverse set of these queries, covering various types and complexities. The selected queries are further used to generate templates that serve as patterns for constructing pattern-induced subgraphs stored at edge servers. For WatDiv, we use its query generator to generate queries as the workload.

Table \ref{table:queryResultSizeStatistics} shows the statistics of SPARQL query results sizes (in bytes) for the selected queries. In the edge-cloud environment, the distribution of end users varies: around 20\% of end users are connected to a single edge server, while the remaining users are connected to multiple edge servers. This distribution enables the evaluation of various network topologies and load conditions in real-world scenarios.

The scalability of the proposed method is evaluated using two metrics: query response time and assignment ratio. The query response time measures the total execution time of all user queries, including those processed at edge servers and in the cloud. The assignment ratio indicates the proportion of queries executed by edge servers or the cloud. We examine these metrics through two separate experiments to evaluate the scalability of both edge servers and end users.

\begin{table}[]
\centering
\caption{{Statistics of SPARQL Queries' Result Sizes }}
\begin{tabular}{cccc}
\hline
\multicolumn{1}{l}{Percentage} & \multicolumn{1}{l}{WatDiv (Bytes)} & \multicolumn{1}{l}{Percentage} & \multicolumn{1}{l}{Dbpedia (Bytes)} \\
\hline
23.33\% & <$10^{5}$ & 26.67\% & <$10^{5}$ \\
66.67\% & $10^{5}-10^{6}$ & 40.00\% & $10^{5}-10^{6}$  \\
6.67\% & $10^{6}-10^{7}$ & 23.33\% & $10^{6}-10^{8}$ \\
3.33\% & >$10^{7}$ & 10.00\% & >$10^{8}$ \\
\hline                  
\end{tabular}%
\label{table:queryResultSizeStatistics}
\end{table}

\subsection{Scalability Test of Edge Servers}
 As we set some virtual machines in servers as edge servers, their configurations can vary for scalability tests. Here, we test the scalability of our method in five aspects: varying storage capacity of edge servers, varying computing power of edge servers, varying bandwidths of edge servers, varying numbers of edge servers and varying graph sizes. By default, we refer to commonly used devices for edge servers, such as Raspberry Pi, and configure the edge servers with 2 GB storage and a 0.2 GHz CPU. The bandwidth from end users to edge servers is approximately 70-80 Mbps, while that from the end users to the cloud is approximately 5 Mbps.

\subsubsection{Varying Storage Capacity of Edge Servers}
In this experiment, we study the impact of the storage sizes of edge servers, ranging from 1.0GB to 2.5GB. Variations in the storage capacities affect $e_{n,k}$ values for each query $Q_n$ sent to edge server $k$. Larger storage allows edge servers to store subgraphs covering more workload patterns, increasing the assignment ratio and improving query execution efficiency. 
Fig. \ref{fig:ExpStorageSizesTime} shows query response times, and Table \ref{table:CPUSpeedsOffloadingRate} presents the corresponding assignment ratio. Except for the Cloud-Only solution, response time generally decreases as storage capacity increases. The Random strategy exhibits some fluctuation due to its lack of data placement awareness. When storage is limited (e.g., 1.0GB), some queries must be assigned to the cloud for execution, the assignment ratio of the cloud is Relatively high.
As storage capacity increases, more queries can be processed at the edge, leading to higher assignment ratios and lower response times. Particularly after reaching a medium storage configuration, our method starts to outperform other strategies, demonstrating more efficient query scheduling, with a performance improvement of 15.46\% in response time.

\begin{table}[]
\centering
\caption{{Assignment Ratio for Different Storage Capacity}}
\begin{tabular}{rrrrrrrrr}
\hline
 & \multicolumn{4}{c}{WatDiv} & \multicolumn{4}{c}{DBpedia}\\
\cline{2-9}
 & 1.0GB  & 1.5GB  & 2.0GB  & 2.5GB   & 1.0GB  & 1.5GB  & 2.0GB  & 2.5GB  \\
\hline
$ES_1$ &   8.50\% & 21.00\% & 18.50\% & 22.50\% & 23.00\% & 27.50\% & 28.00\% & 27.00\% \\
$ES_2$ &  13.00\% & 25.00\% & 24.50\% & 24.00\% & 17.50\% & 23.50\% & 23.00\% & 25.00\% \\
$ES_3$ &  22.50\% & 24.50\% & 27.50\% & 24.50\% & 18.50\% & 23.00\% & 28.00\% & 20.50\% \\
$ES_4$ &  17.00\% & 24.50\% & 26.00\% & 26.50\% & 18.00\% & 20.00\% & 12.50\% & 18.50\% \\
$Cloud$&  39.00\% &  5.00\% & 3.50\% &   2.50\% &  23.00\% & 6.00\% &  8.50\% &  9.00\% \\
\hline
\end{tabular}%
\label{table:StorageSizesOffloadingRate}
\end{table}

\begin{figure}[]
\centering
    \subfigure[{WatDiv}]{%
        \centering
		\resizebox{0.47\columnwidth}{!}{
	\definecolor{lightBlue}{HTML}{42B4E7}
\definecolor{lightGrey}{HTML}{336699}
\definecolor{lightGreen}{HTML}{73B66B}
\definecolor{yellow}{HTML}{FFCB18}
\begin{tikzpicture}[]
    \begin{axis}[
        ybar,
        width=8cm,
        height=6cm,
        bar width=6pt,
        xlabel=Storage Capacity,
        ylabel=Query Response Time (in ms),
        xtick=data,
        xticklabels={1.0GB,1.5GB,2.0GB,2.5GB},
        ymax=5000000,
        ymode=log,
        enlarge x limits=0.2,
        legend cell align=left,
        legend style={draw=none},
        legend pos=north west,
        legend columns=2
    ]

    \addplot+[ybar, fill=lightBlue, draw=black, postaction={
        pattern=north east lines,
        pattern color=black
    }] coordinates {
        (5,     560313.9)
        (10,    560313.9)
        (15,    560313.9)
        (20,    560313.9)
    };
    \addlegendentry{Cloud-Only};

    \addplot+[ybar, fill=lightGrey, draw=black] coordinates {
        (5,     288465.2)
        (10,    179525.8)
        (15,    85137)
        (20,    132180.3)
    };
    \addlegendentry{Random};

    \addplot+[ybar, fill=lightGreen, draw=black, postaction={
        pattern=north west lines,
        pattern color=black 
    }] coordinates {
        (5,     127858.7)
        (10,    100077.2)
        (15,    24325)
        (20,    21889.5)
    };
    \addlegendentry{Edge-First};

    \addplot+[ybar, fill=orange, draw=black] coordinates {
        (5,     122440.9)
        (10,    94424.4)
        (15,    22129.9)
        (20,    20293.6)
    };
    \addlegendentry{Greedy};

    \addplot+[ybar, fill=yellow, draw=black] coordinates {
        (5,     119632.9)
        (10,    90971.8 )
        (15,    20563.4)
        (20,    18579.4)
    };
    \addlegendentry{Ours};
 
    \end{axis}
\end{tikzpicture}
    }
       \label{fig:ExpStorageSizesTimeWatDiv}%
       }
   \subfigure[{DBpedia}]{%
        \centering
		\resizebox{0.47\columnwidth}{!}{
	\definecolor{lightBlue}{HTML}{42B4E7}
\definecolor{lightGrey}{HTML}{336699}
\definecolor{lightGreen}{HTML}{73B66B}
\definecolor{yellow}{HTML}{FFCB18}
\begin{tikzpicture}[]
    \begin{axis}[
        ybar,
        width=8cm,
        height=6cm,
        bar width=6pt,
        xlabel=Storage Capacity,
        ylabel=Query Response Time (in ms),
        xtick=data,
        xticklabels={1.0GB,1.5GB,2.0GB,2.5GB},
        ymax=5000000,
        ymode=log,
        enlarge x limits=0.2,
        legend cell align=left,
        legend style={draw=none},
        legend pos=north west,
        legend columns=2 
    ]
    
    \addplot+[ybar, fill=lightBlue, draw=black, postaction={
        pattern=north east lines,
        pattern color=black
    }] coordinates {
        (5,     1078871.7)
        (10,    1078871.7)
        (15,    1078871.7)
        (20,    1078871.7)
    };
    \addlegendentry{Cloud-Only};
    
    \addplot+[ybar, fill=lightGrey, draw=black] coordinates {
        (5,     776665.4)
        (10,    621031.6)
        (15,    506738.3)
        (20,    492541)
    };
    \addlegendentry{Random};
    
    \addplot+[ybar, fill=lightGreen, draw=black, postaction={
        pattern=north west lines,
        pattern color=black 
    }] coordinates {
        (5,     669413.5)
        (10,    234014.1)
        (15,    192703.3)
        (20,    185990.9)
    };
    \addlegendentry{Edge-First};
    
    \addplot+[ybar, fill=orange, draw=black] coordinates {
        (5,     669413.5)
        (10,    205262.2)
        (15,    182857.8)
        (20,    177833.9)
    };
    \addlegendentry{Greedy};
    
      \addplot+[ybar, fill=yellow, draw=black] coordinates {
        (5,     669413.5)
        (10,    202306.7 )
        (15,    176529.2)
        (20,    173479.5)
    };
    \addlegendentry{Ours};
 
    \end{axis}
\end{tikzpicture}
    }
       \label{fig:ExpStorageSizesTimeDBpedia}%
       }
\caption{Varying Storage Sizes of Edge Servers}%
 \label{fig:ExpStorageSizesTime}
\end{figure}
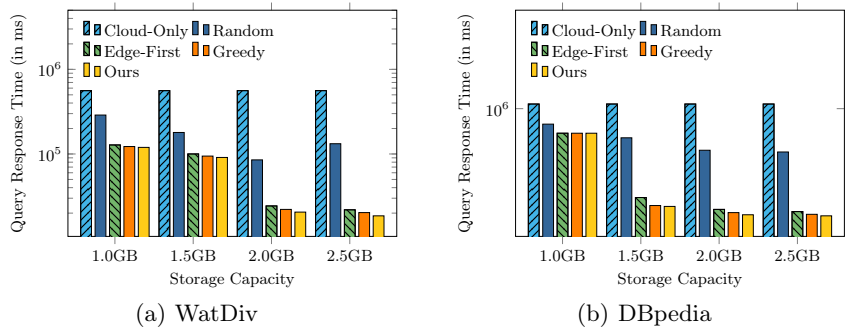

\subsubsection{Varying Computing Power of Edge Servers}

This experiment evaluate the impact of edge server computing power. We vary the CPU clock speed of each edge server from 0.2GHz to 0.8GHz, in 0.2GHz increments. Different CPU clock speeds of edge servers reflect varying computational resources. Therefore, differences in computing power will influence the $F_k$ values for a given edge server $k$. 
Fig. \ref{fig:ExpCPUSpeeds} and Table \ref{table:CPUSpeedsOffloadingRate} present the query response times and assignment ratios under different computing power settings. As computing power increases, the concurrent processing capability of edge servers improves, thereby reducing the overall query response time. Our method exhibits lower latency than the others. Higher computing power allows for handling more complex queries, thus increasing the assignment ratio to edge servers. 
Variations in assignment ratios across servers are mainly due to differences in query complexity.

\begin{table}[]
\centering
\caption{{Assignment Ratio for Different Computing Power}}
\begin{tabular}{rrrrrrrrr}
\hline
 & \multicolumn{4}{c}{WatDiv} & \multicolumn{4}{c}{DBpedia}\\
\cline{2-9}
 & 0.2GHz  & 0.4GHz  & 0.6GHz  & 0.8GHz & 0.2GHz  & 0.4GHz  & 0.6GHz  & 0.8GHz \\
\hline
$ES_1$  & 18.50\% & 19.50\% & 19.00\% & 19.00\% & 28.00\% & 28.50\% & 29.00\% & 31.00\% \\
$ES_2$  & 24.50\% & 25.00\% & 25.00\% & 23.50\% & 23.00\% & 21.00\% & 21.00\% & 22.50\% \\
$ES_3$  & 27.50\% & 27.50\% & 27.50\% & 30.00\% & 28.00\% & 28.00\% & 28.50\% & 29.00\% \\
$ES_4$  & 26.00\% & 26.00\% & 27.50\% & 27.50\% & 12.50\% & 15.50\% & 16.00\% & 16.50\% \\
$Cloud$ &  3.50\% &  2.00\% &  1.00\% &  0.00\% &  8.50\% &  7.00\% &  5.50\% &  1.00\%\\
\hline
\end{tabular}%
\label{table:CPUSpeedsOffloadingRate}
\end{table}

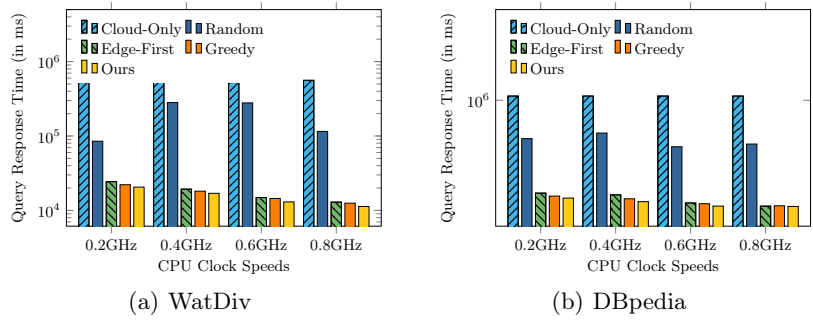
\begin{figure}
\centering
\subfigure[{WatDiv}]{%
        \centering
		\resizebox{0.45\columnwidth}{!}{
	\definecolor{lightBlue}{HTML}{42B4E7}
\definecolor{lightGrey}{HTML}{336699}
\definecolor{lightGreen}{HTML}{73B66B}
\definecolor{yellow}{HTML}{FFCB18}
\begin{tikzpicture}[]
    \begin{axis}[
        ybar,
        width=8cm,
        height=6cm,
        bar width=6pt,
        xlabel=CPU Clock Speeds,
        ylabel=Query Response Time (in ms),
        xtick=data,
        xticklabels={0.2GHz,0.4GHz,0.6GHz,0.8GHz},
        ymax=5000000,
        ymode=log,
        enlarge x limits=0.2,
        legend cell align=left,
        legend style={draw=none},
        legend pos=north west,
        legend columns=2 
    ]
    
    \addplot plot[ybar, fill=lightBlue, draw=black, postaction={
        pattern=north east lines,
        pattern color=black
    }] coordinates {
        (2,   560313.9)
        (4,   560313.9)
        (6,   560313.9)
        (8,   560313.9)
    };
    \addlegendentry{Cloud-Only};
    
    \addplot plot[ybar, fill=lightGrey, draw=black] coordinates {
        (2,     85137)
        (4,       281771.6)
        (6,     278560.4)
        (8,     115241.9)
    };
    \addlegendentry{Random};
    
    \addplot plot[ybar, fill=lightGreen, draw=black, postaction={
        pattern=north west lines,
        pattern color=black 
    }] coordinates {
        (2,     24325)
        (4,      19316.2)
        (6,      14891.6)
        (8,      12927.4)
    };
    \addlegendentry{Edge-First};
    
    \addplot plot[ybar, fill=orange, draw=black] coordinates {
        (2,     22129.9)
        (4,      18125)
        (6,      14475.8)
        (8,      12511.9)
    };
    \addlegendentry{Greedy};
    
    \addplot plot[ybar, fill=yellow, draw=black] coordinates {
        (2,     20563.4)
        (4,       16972.8)
        (6,     13019.3)
        (8,     11294.4)
    };
    \addlegendentry{Ours};
 
    \end{axis}
\end{tikzpicture}
    }
       \label{fig:ExpCPUSpeedsWatDiv}%
       }
   \subfigure[{DBpedia}]{%
        \centering
		\resizebox{0.45\columnwidth}{!}{
	\definecolor{lightBlue}{HTML}{42B4E7}
\definecolor{lightGrey}{HTML}{336699}
\definecolor{lightGreen}{HTML}{73B66B}
\definecolor{yellow}{HTML}{FFCB18}
\begin{tikzpicture}[]
    \begin{axis}[
        ybar,
        width=8cm,
        height=6cm,
        bar width=6pt,
        xlabel=CPU Clock Speeds,
        ylabel=Query Response Time (in ms),
        xtick=data,
        xticklabels={0.2GHz,0.4GHz,0.6GHz,0.8GHz},
        ymax=5000000,
        ymode=log,
        enlarge x limits=0.2,
        legend cell align=left,
        legend style={draw=none},
        legend pos=north west,
        legend columns=2 
    ]
    
    \addplot plot[ybar, fill=lightBlue, draw=black, postaction={
        pattern=north east lines,
        pattern color=black
    }] coordinates {
        (2,     1078871.7)
        (4,     1078871.7)
        (6,     1078871.7)
        (8,     1078871.7)
    };
    \addlegendentry{Cloud-Only};
    
    \addplot plot[ybar, fill=lightGrey, draw=black] coordinates {
        (2,    506738.3)
        (4,   559525.2)
        (6,    438645)
        (8,    460336)
    };
    \addlegendentry{Random};
    
    \addplot plot[ybar, fill=lightGreen, draw=black, postaction={
        pattern=north west lines,
        pattern color=black 
    }] coordinates {
        (2,    192703.3)
        (4,    186844.2 )
        (6,    161810.9)
        (8,    152921.6)
    };
    \addlegendentry{Edge-First};
    
    \addplot plot[ybar, fill=orange, draw=black] coordinates {
        (2,    182857.8)
        (4,    174286.8 )
        (6,    159636.2)
        (8,    153925.4)
    };
    \addlegendentry{Greedy};
    
    \addplot plot[ybar, fill=yellow, draw=black] coordinates {
        (2,     176529.2)
        (4,     165635.7)
        (6,    153069.3)
        (8,    152037.7)
    };
    \addlegendentry{Ours};
 
    \end{axis}
\end{tikzpicture}
    }
       \label{fig:ExpCPUSpeedsDBpedia}%
       }
\caption{Varying Computing Power of Edge Servers}%
 \label{fig:ExpCPUSpeeds}
\end{figure}

\begin{table}[] 
\centering
\caption{{Assignment Ratio for Different Bandwidths}}
\begin{tabular}{rrrrrrrrr}
\hline
 & \multicolumn{4}{c}{WatDiv} & \multicolumn{4}{c}{DBpedia}\\
\cline{2-9}
& 10Mbps& 30Mbps& 50Mbps& 70Mbps & 10Mbps& 30Mbps& 50Mbps& 70Mbps \\
\hline
$ES_1$ & 18.00\%& 18.00\%& 18.50\% & 18.50\% & 23.00\%& 26.00\%& 28.50\% & 28.00\%  \\
$ES_2$ & 25.00\%& 23.50\%& 24.50\% & 24.50\% & 20.50\%& 20.50\%& 21.00\% & 23.00\%  \\
$ES_3$ & 25.00\%& 26.00\%& 27.50\% & 27.50\% & 24.00\%& 25.50\%& 27.50\% & 28.00\%  \\
$ES_4$ & 24.00\%& 25.50\%& 26.00\% & 26.00\% & 12.00\%& 12.50\%& 12.50\% & 12.50\%  \\
$Cloud$&  8.00\%&  7.00\%&  3.50\% &  3.50\% & 20.50\%& 15.50\%& 10.50\% &  8.50\% \\
\hline
\end{tabular}%
\label{table:BandwidthsOffloadingRate}
\end{table}

\begin{figure}
\centering
\subfigure[{WatDiv}]{%
        \centering
		\resizebox{0.45\columnwidth}{!}{
	\definecolor{lightBlue}{HTML}{42B4E7}
\definecolor{lightGrey}{HTML}{336699}
\definecolor{lightGreen}{HTML}{73B66B}
\definecolor{yellow}{HTML}{FFCB18}
\begin{tikzpicture}[]
    \begin{axis}[
        ybar,
        width=8cm,
        height=6cm,
        bar width=6pt,
        xlabel=Bandwidths,
        ylabel=Query Response Time (in ms),
        xtick=data,
        xticklabels={10Mbps,30Mbps,50Mbps,70Mbps},
        ymax=5000000,
        ymode=log,
        enlarge x limits=0.2,
        legend cell align=left,
        legend style={draw=none},
        legend pos=north west,
        legend columns=2 
    ]
    
    \addplot plot[ybar, fill=lightBlue, draw=black, postaction={
        pattern=north east lines,
        pattern color=black
    }] coordinates {
        (2,     560313.9)
        (4,    560313.9)
        (6,    560313.9)
        (8,    560313.9)
    };
    \addlegendentry{Cloud-Only};
    
    \addplot plot[ybar, fill=lightGrey, draw=black] coordinates {
        (2,     144956.9)
        (4,    104347.3)
        (6,    90734.5)
        (8,    86880)
    };
    \addlegendentry{Random};
    
    \addplot plot[ybar, fill=lightGreen, draw=black, postaction={
        pattern=north west lines,
        pattern color=black 
    }] coordinates {
        (2,     40420.2)
        (4,    31011.6)
        (6,    28223.3)
        (8,    23280.3)
    };
    \addlegendentry{Edge-First};
    
    \addplot plot[ybar, fill=orange, draw=black] coordinates {
        (2,     37852.7)
        (4,      29153.3)
        (6,      27316.1)
        (8,      21899.5)
    };
    \addlegendentry{Greedy};
    
    \addplot plot[ybar, fill=yellow, draw=black] coordinates {
        (2,     36975.9)
        (4,   28260.4)
        (6,    26048.4)
        (8,    19051.1)
    };
    \addlegendentry{Ours};
 
    \end{axis}
\end{tikzpicture}
    }
       \label{fig:ExpBandwidthsWatDiv}%
       }
   \subfigure[{DBpedia}]{%
        \centering
		\resizebox{0.45\columnwidth}{!}{
	\definecolor{lightBlue}{HTML}{42B4E7}
\definecolor{lightGrey}{HTML}{336699}
\definecolor{lightGreen}{HTML}{73B66B}
\definecolor{yellow}{HTML}{FFCB18}
\begin{tikzpicture}[]
    \begin{axis}[
        ybar,
        width=8cm,
        height=6cm,
        bar width=6pt,
        xlabel=Bandwidths,
        ylabel=Query Response Time (in ms),
        xtick=data,
        xticklabels={10Mbps,30Mbps,50Mbps,70Mbps},
        ymax=5000000,
        ymode=log,
        enlarge x limits=0.2,
        legend cell align=left,
        legend style={draw=none},
        legend pos=north west,
        legend columns=2 
    ]
    
    \addplot plot[ ybar, fill=lightBlue, draw=black, postaction={
        pattern=north east lines,
        pattern color=black
    }] coordinates {
        (2,     1078871.7)
        (4,   1078871.7 )
        (6,    1078871.7)
        (8,    1078871.7)
    };
    \addlegendentry{Cloud-Only};
    
    \addplot plot[ybar, fill=lightGrey, draw=black] coordinates {
        (2,     545575.3)
        (4,    351079)
        (6,    325827.6)
        (8,    506738.3)
    };
    \addlegendentry{Random};
    
    \addplot plot[ ybar, fill=lightGreen, draw=black, postaction={
        pattern=north west lines,
        pattern color=black 
    }] coordinates {
        (2,     457582.8)
        (4,    232430.7)
        (6,    195743.6)
        (8,    192703.3)
    };
    \addlegendentry{Edge-First};
    
    \addplot plot[ybar, fill=orange, draw=black] coordinates {
        (2,     443156.3)
        (4,      229306.5)
        (6,      189372.4)
        (8,      182857.8)
    };
    \addlegendentry{Greedy};
    
    \addplot plot[ ybar, fill=yellow, draw=black] coordinates {
        (2,     412761.5)
        (4,    220567.5)
        (6,    184227.5)
        (8,    176529.2)
    };
    \addlegendentry{Ours};
 
    \end{axis}
\end{tikzpicture}
    }
       \label{fig:ExpBandwidthsDBpedia}%
       }
\caption{Varying Bandwidths of Edge Servers}%
 \label{fig:ExpBandwidths}
\end{figure}
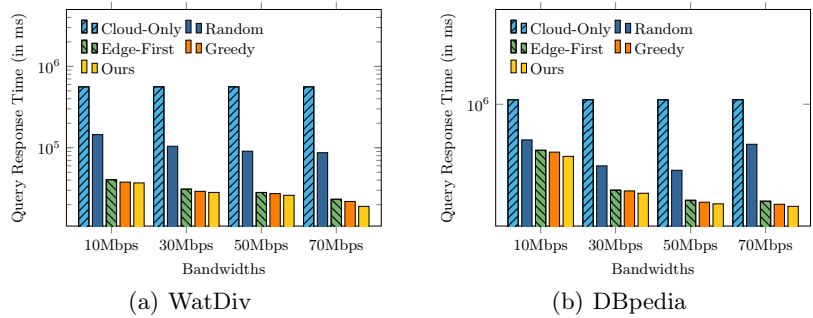

\subsubsection{Varying Bandwidths of Edge Servers}
This experiment evaluates the impact of network bandwidth between end users and edge servers. Using the Linux traffic control tool $tc$, we increase the network bandwidth from 10Mbps to 70Mbps, with an increments of 20Mbps. 
Changes in bandwidth affect the $r_{n,k}$ value between end user $n$ and edge server $k$, thereby influencing data transmission delay and overall response time. 
Fig. \ref{fig:ExpBandwidths} and Table \ref{table:BandwidthsOffloadingRate} show the query response times and the assignment ratios under different bandwidth settings. Lower network bandwidth (e.g., 10Mbps) increases data transmission time, leading to higher overall processing latency. To avoid this, more queries are assigned to the cloud, increasing the assignment ratio in the cloud. As the bandwidth increases, the query response time gradually decreases, and the assignment ratio to edge servers gradually increases. Compared to other methods, our method maintains better performance under fluctuating bandwidth conditions by dynamically adjusting the query assignment strategy, achieving up to a 13.01\%-18.17\% reduction in response time under higher bandwidth conditions.

{\subsubsection{Varying Number of Edge Servers and End Users} 
We scale the number of edge servers and end users proportionally to evaluate the system behavior at larger scales. Specifically, the number of edge servers is scaled from 4 to 128, while the number of end users is increased proportionally from 20 to 640. The experiments are conducted on both the WatDiv and DBpedia datasets, and the results are shown in the Fig. 10. 
As the system scale increases, the query response times of all methods exhibit an upward trend.
Although the increase in the number of end users with system scaling naturally reduces the assignment ratio per server, our method is still able to maintain a relatively stable load distribution across all scale settings.
Across both datasets, our method consistently achieves better performance under all scale configurations. As the scale grows, its advantage over the edge-first and greedy baselines becomes more evident, yielding performance improvements of of 16.20\% and 6.47\%, respectively, under large-scale settings on the WatDiv dataset.
In scale reaches beyond (32, 160), the cloud-only approach experiences timeouts, indicating that edge-based solutions are more scalable.}

\begin{figure}
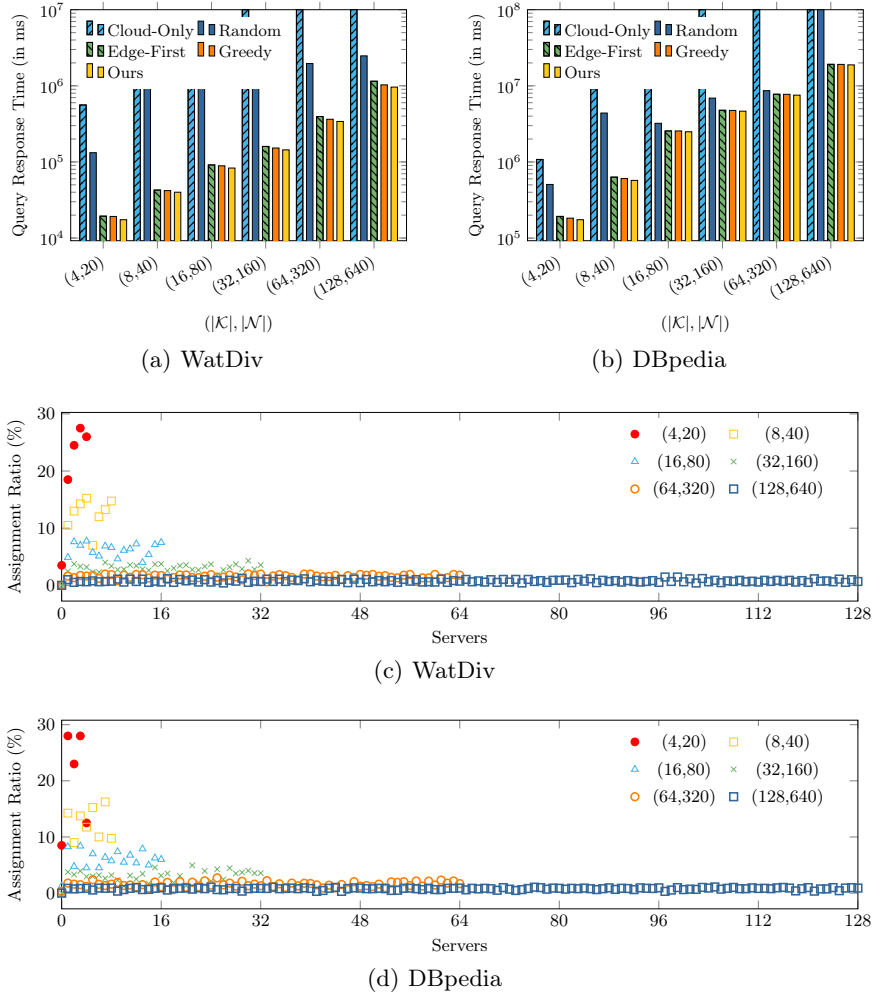

\centering
\subfigure[{WatDiv}]{%
        \centering
		\resizebox{0.48\columnwidth}{!}{
	\definecolor{lightBlue}{HTML}{42B4E7}
\definecolor{lightGrey}{HTML}{336699}
\definecolor{lightGreen}{HTML}{73B66B}
\definecolor{yellow}{HTML}{FFCB18}
\begin{tikzpicture}[]
    \begin{axis}[
        ybar,
        width=8cm,
        height=6cm,
        bar width=3.5pt,
        xlabel={($|\mathcal{K}|, |\mathcal{N}|$)},,
        ylabel=Query Response Time (in ms),
        xtick=data,
        xticklabel style={rotate=30,anchor=east,yshift=-2mm},
        xticklabels={(4,20), (8,40), (16,80), (32,160), (64,320), (128,640)},
        ymax=10000000,
        ymode=log,
        enlarge x limits=0.12,
        legend cell align=left,
        legend style={draw=none},
        legend pos=north west,
        legend columns=2 
    ]
    
    \addplot plot[ ybar, fill=lightBlue, draw=black, postaction={
        pattern=north east lines,
        pattern color=black
    }] coordinates {
        (0,     560313.9)
        (1,    1535486)
        (2,    6668673)
        (3,    10000000)
        (4,    10000000)
        (5,    10000000)
    };
    \addlegendentry{Cloud-Only};
    
    \addplot plot[ybar, fill=lightGrey, draw=black] coordinates {
        (0,     132180.3)
        (1,    929263.4)
        (2,    1688058.25)
        (3,    2559606.25)
        (4,    1962617.5) 
        (5,    2475260)
    };
    \addlegendentry{Random};
    
    \addplot plot[ ybar, fill=lightGreen, draw=black, postaction={
        pattern=north west lines,
        pattern color=black 
    }] coordinates {
        (0,     19396.6)
        (1,    42708)
        (2,    91356.6)
        (3,    159813.3)
        (4,    394800.3)
        (5,    1148801.5)
    };
    \addlegendentry{Edge-First};
    
    \addplot plot[ybar, fill=orange, draw=black] coordinates {
        (0,     19229.8)
        (1,    41996.5)
        (2,    88928.3)
        (3,    152424.5)
        (4,    363672)
        (5,    1029226)
    };
    \addlegendentry{Greedy};
    
    \addplot plot[ ybar, fill=yellow, draw=black] coordinates {
        (0,     17415.8)
        (1,    40010.7)
        (2,    83123.9)
        (3,    144034.5)
        (4,    340419.1)
        (5,    962639.5)
    };
    \addlegendentry{Ours};
 
    \end{axis}
 
\end{tikzpicture}
    }
       }
\subfigure[{DBpedia}]{%
        \centering
		\resizebox{0.48\columnwidth}{!}{
	\definecolor{lightBlue}{HTML}{42B4E7}
\definecolor{lightGrey}{HTML}{336699}
\definecolor{lightGreen}{HTML}{73B66B}
\definecolor{yellow}{HTML}{FFCB18}
\begin{tikzpicture}[]
    \begin{axis}[
        ybar,
        width=8cm,
        height=6cm,
        bar width=3.5pt,
        xlabel={($|\mathcal{K}|, |\mathcal{N}|$)},
        ylabel=Query Response Time (in ms),
        xtick=data,
        xticklabel style={rotate=30,anchor=east,yshift=-2mm},
        xticklabels={(4,20), (8,40), (16,80), (32,160), (64,320), (128,640)},
        ymax=100000000,
        ymode=log,
        enlarge x limits=0.12,
        legend cell align=left,
        legend style={draw=none},
        legend pos=north west,
        legend columns=2 
    ]
    
    \addplot plot[ ybar, fill=lightBlue, draw=black, postaction={
        pattern=north east lines,
        pattern color=black
    }] coordinates {
        (0,     1078871.7)
        (1,    9165411)
        (2,    27488635)
        (3,    100000000)
        (4,    100000000)
        (5,    100000000)
    };
    \addlegendentry{Cloud-Only};
    
    \addplot plot[ybar, fill=lightGrey, draw=black] coordinates {
        (0,     506738.3)
        (1,    4391761)
        (2,    3221668)
        (3,    6901705)
        (4,    8636554) 
        (5,    100000000)
    };
    \addlegendentry{Random};
    
    \addplot plot[ ybar, fill=lightGreen, draw=black, postaction={
        pattern=north west lines,
        pattern color=black 
    }] coordinates {
        (0,     192703.3)
        (1,    632060.2)
        (2,    2554702.8)
        (3,    4774382.7)
        (4,    7755848.6)
        (5,    19166636)
    };
    \addlegendentry{Edge-First};
    
    \addplot plot[ybar, fill=orange, draw=black] coordinates {
        (0,     182857.8)
        (1,    605949.7)
        (2,    2555527.5)
        (3,    4756479.5)
        (4,    7739794.1)
        (5,    19134049)
    };
    \addlegendentry{Greedy};
    
    \addplot plot[ ybar, fill=yellow, draw=black] coordinates {
        (0,     174393.4)
        (1,    572657.8)
        (2,    2493304.8)
        (3,    4646170)
        (4,    7549755.9)
        (5,    18855915)
    };
    \addlegendentry{Ours};
 
    \end{axis}
 
\end{tikzpicture}
    }
       }
\subfigure[{WatDiv}]{%
        \centering
		\resizebox{\columnwidth}{!}{
	\input{exps/revision/WatDivNumberRatio}
    }
       }
\subfigure[{DBpedia}]{%
        \centering
		\resizebox{\columnwidth}{!}{
	\input{exps/revision/DBPediaNumberRatio}
    }
       } 
{\caption{Scalability evaluation}}%
 \label{fig:ExpEdgeServerNumber}
\end{figure}

\subsubsection{Varying Graph Sizes} 
{
We generate three WatDiv datasets ranging from 100 million to 500 million triples. Fig. 11 and Table 8 show the query response times and assignment ratios for different graph sizes. 
As the graph size increases, the complexity and resource demands of query processing grow accordingly, leading to longer query response times. When scaling from 100MB to 200MB, the increased load on edge servers causes some lightweight queries to be assigned to the cloud, leading to a higher cloud assignment ratio. However, as the graph grows, the fraction of simple queries decreases, and the remaining complex queries incur higher communication and computation overhead in the cloud. Consequently, the system tends to retain them at edge servers despite increased local load, thus reducing the cloud assignment ratio.}

\begin{table}[]
\centering
{\caption{{Assignment Ratio for Different Graph Sizes}}}
\begin{tabular}{rrrrrr}
\hline
& 100M  & 200M   & 300M   & 400M & 500M \\
\hline
$ES_1$  & 18.50\% & 15.00\% & 17.50\% & 19.00\% & 19.00\%\\
$ES_2$  & 24.50\% & 23.00\% & 24.50\% & 23.00\% & 21.00\%\\
$ES_3$  & 27.50\% & 26.00\% & 27.00\% & 27.50\% & 25.00\%\\
$ES_4$  & 26.00\% & 24.50\% & 25.00\% & 30.00\% & 35.00\%\\
$Cloud$ &  3.50\% & 11.50\% &  6.00\% & 0.50\% & 0.00\%\\
\hline
\end{tabular}%
\label{table:NumberOffloadingRateGraphSizes}
\end{table}

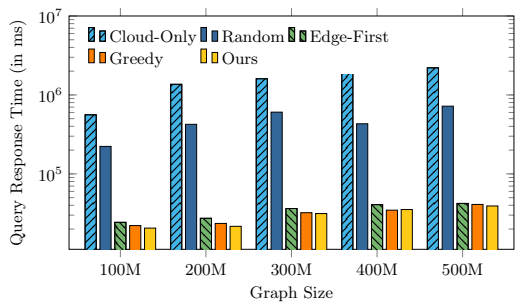
\begin{figure}[]
\centering
		\resizebox{0.6\columnwidth}{!}{
	\definecolor{lightBlue}{HTML}{42B4E7}
\definecolor{lightGrey}{HTML}{336699}
\definecolor{lightGreen}{HTML}{73B66B}
\definecolor{yellow}{HTML}{FFCB18}
\begin{tikzpicture}[]
    \begin{axis}[
        width=10cm,
        xtick=data,
        height = 6cm,
        ybar,
        bar width=6pt,
        xlabel=Graph Size,
        ylabel=Query Response Time (in ms),
        xtick=data,
        xticklabels={100M, 200M, 300M, 400M, 500M},
        ymax=10000000,
        ymode=log,
        enlarge x limits=0.15,
        legend cell align=left,
        legend style={draw=none},
        legend pos=north west,
        legend columns=3
    ]
    
    \addplot plot[ ybar, fill=lightBlue, draw=black, postaction={
        pattern=north east lines,
        pattern color=black
    }] coordinates {
        (5,     560313.9)
        (10,    1363553.2)
        (15,    1603032.8)
        (20,    1883050.3)
        (25,    2205401)
    };
    \addlegendentry{Cloud-Only};
    
    \addplot plot[ybar, fill=lightGrey, draw=black] coordinates {
        (5,     222856)
        (10,    424418.7)
        (15,    605276.7)
        (20,    431135)
        (25,    720234)
    };
    \addlegendentry{Random};
    
    \addplot plot[ybar, fill=lightGreen, draw=black, postaction={
        pattern=north west lines,
        pattern color=black 
    }] coordinates {
        (5,     24325)
        (10,    27364)
        (15,    36415.9)
        (20,    40746.7)
        (25,    42229.95)
    };
    \addlegendentry{Edge-First};
    
    \addplot plot[ybar, fill=orange, draw=black] coordinates {
        (5,     22129.9)
        (10,      23563.6)
        (15,      32226.2)
        (20,      34694.7)
        (25,    40998.15)
    };
    \addlegendentry{Greedy};
    
    \addplot plot[ybar, fill=yellow, draw=black] coordinates {
        (5,     20563.4)
        (10,    21655.7)
        (15,    31422.6)
        (20,    35326.5)
        (25,    39213.6)
    };
    \addlegendentry{Ours};
 
    \end{axis}
\end{tikzpicture}
    }
{\caption{Varying Graph Sizes}}%
 \label{fig:ExpGraphSizes}
\end{figure}

\subsection{Scalability Test of End Users}
In this section, we focus on the impact of the number of end users and the complexity of queries on system performance. The experiments explore how increased user demand impacts assignment ratio and response time.

\subsubsection{Varying Number of Queries Per End User}
{In the default setting, the total number of queries equals the number of end users, where each end user issues one query request. We also study the case where each end user issues multiple queries while keeping the number of end users fixed. We consider scenarios where each user submits 1 to 4 queries. 
Fig. 12 and Table 9 report the query response time and assignment ratios, respectively.
As the number of queries per end user increases, the overall workload grows, resulting in a consistent increase in query response time.
For both datasets, most queries are assigned to edge servers, exhibiting a high edge assignment ratio. Although the assignment ratio to the cloud remains relatively small, the absolute number of queries assigned to the cloud increases as the total workload grows.}

\begin{table}[]
\centering
{\caption{{Assignment Ratio for Varying Number of Queries}}}
\begin{tabular}{rrrrrrrrr}
\hline
 & \multicolumn{4}{c}{WatDiv} & \multicolumn{4}{c}{DBpedia}\\
\cline{2-9}
& 1  & 2  & 3  & 4 & 1  & 2  & 3  & 4  \\
\hline
$ES_1$ & 18.50\% &17.50\% & 18.20\% & 17.50\% & 28.00\% & 28.25\% & 29.67\% & 28.12\% \\
$ES_2$ & 24.50\% & 27.00\% & 23.50\% & 23.62\%& 23.00\% & 19.25\% & 18.33\% & 18.62\% \\
$ES_3$ & 27.50\% & 26.50\% & 24.80\% & 27.38\%  & 28.00\% & 27.50\% & 30.00\% & 28.12\%\\
$ES_4$ & 26.00\% & 26.50\% & 30.50\% & 29.00\%  & 12.50\% & 21.50\% & 20.33\% & 20.52\%\\
$Cloud$& 3.50\% & 2.50\% & 3.00\% & 2.50\% & 8.50\% & 3.50\% & 1.67\% & 4.62\% \\
\hline
\end{tabular}%
\label{table:EUNumberOffloadingRate}
\end{table}

\begin{figure}
\centering
\subfigure[{WatDiv}]{%
        \centering
		\resizebox{0.45\columnwidth}{!}{
	\definecolor{lightBlue}{HTML}{42B4E7}
\definecolor{lightGrey}{HTML}{336699}
\definecolor{lightGreen}{HTML}{73B66B}
\definecolor{yellow}{HTML}{FFCB18}
\begin{tikzpicture}[]
    \begin{axis}[
        ybar,
        width=8cm,
        height=6cm,
        bar width=6pt,
        xlabel=Number Of Queries per End User,
        ylabel=Query Response Time (in ms),
        xtick=data,
        xticklabels={1,2,3,4},
        ymax=10000000,
        ymode=log,
        enlarge x limits=0.2,
        legend cell align=left,
        legend style={draw=none},
        legend pos=north west,
        legend columns=2 
    ]
    
    \addplot plot[ ybar, fill=lightBlue, draw=black, postaction={
        pattern=north east lines,
        pattern color=black
    }] coordinates {
        (5,     560313.9)
        (10,    2188321)
        (15,    11438976)
        (20,    43008456)
    };
    \addlegendentry{Cloud-Only};
    
    \addplot plot[ybar, fill=lightGrey, draw=black] coordinates {
        (5,     132180.3)
        (10,    2348216)
        (15,    2157195.5)
        (20,    2831496.75)
    };
    \addlegendentry{Random};
    
    \addplot plot[ybar, fill=lightGreen, draw=black, postaction={
        pattern=north west lines,
        pattern color=black 
    }] coordinates {
        (5,     19396.6)
        (10,    63575.2)
        (15,    122665.6)
        (20,    249328)
    };
    \addlegendentry{Edge-First};
    
    \addplot plot[ybar, fill=orange, draw=black] coordinates {
        (5,     19229.8)
        (10,      54461.6)
        (15,      111521.6)
        (20,      251447.5)
    };
    \addlegendentry{Greedy};
    
    \addplot plot[ybar, fill=yellow, draw=black] coordinates {
        (5,     17415.8)
        (10,    48699.8)
        (15,    104666)
        (20,    227437.2)
    };
    \addlegendentry{Ours};
 
    \end{axis}
\end{tikzpicture}
    }
       }
   \subfigure[{DBpedia}]{%
        \centering
		\resizebox{0.45\columnwidth}{!}{
	\definecolor{lightBlue}{HTML}{42B4E7}
\definecolor{lightGrey}{HTML}{336699}
\definecolor{lightGreen}{HTML}{73B66B}
\definecolor{yellow}{HTML}{FFCB18}
\begin{tikzpicture}[]
    \begin{axis}[
        ybar,
        width=8cm,
        height=6cm,
        bar width=6pt,
        xlabel=Number Of Queries per End User,
        ylabel=Query Response Time (in ms),
        xtick=data,
        xticklabels={1,2,3,4},
        ymax=10000000,
        ymode=log,
        enlarge x limits=0.2,
        legend cell align=left,
        legend style={draw=none},
        legend pos=north west,
        legend columns=2 
    ]
    
    \addplot plot[ ybar, fill=lightBlue, draw=black, postaction={
        pattern=north east lines,
        pattern color=black
    }] coordinates {
        (5,     1078871.7)
        (10,    69943366)
        (15,    69943366)
        (20,    69943366)
    };
    \addlegendentry{Cloud-Only};
    
    \addplot plot[ybar, fill=lightGrey, draw=black] coordinates {
        (5,     506738.3)
        (10,    8458211.5)
        (15,    1007069.8)
        (20,    9027337.7)
    };
    \addlegendentry{Random};
    
    \addplot plot[ ybar, fill=lightGreen, draw=black, postaction={
        pattern=north west lines,
        pattern color=black 
    }] coordinates {
        (5,     192703.3)
        (10,    527246)
        (15,    1033499.8)
        (20,    1768125.6)
    };
    \addlegendentry{Edge-First};
    
    \addplot plot[ybar, fill=orange, draw=black] coordinates {
        (5,     182857.8)
        (10,      519975.2)
        (15,      1007069.8)
        (20,      1725507.5)
    };
    \addlegendentry{Greedy};
    
    \addplot plot[ ybar, fill=yellow, draw=black] coordinates {
        (5,     174393.4)
        (10,    497529.6)
        (15,    967753.5)
        (20,    1669654.4)
    };
    \addlegendentry{Ours};
 
    \end{axis}
 
\end{tikzpicture}
    }
       }
{\caption{Varying Number of Queries per End User}}%
 \label{fig:ExpEndUserNumber}
\end{figure}
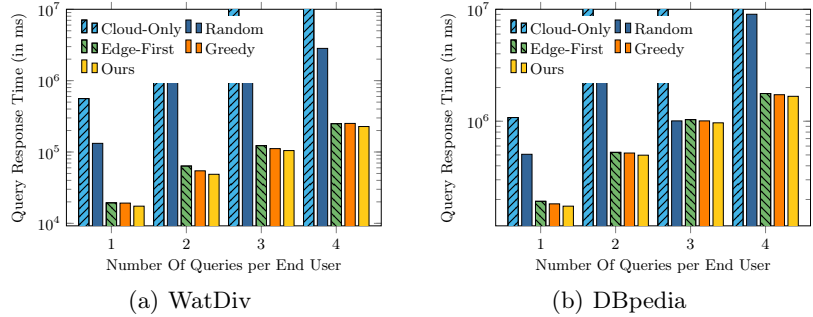

\nop{This experiment aims to explores the impact of varying numbers of end users while keeping other factors (e.g., edge server configuration and network bandwidth) constant. As the number of end users increases, the system needs to efficiently handle more queries to ensure user experience and system performance stability. We assume that the relationship between end users and queries remains linear, meaning that each additional end user corresponds to an additional query. 
Fig. \ref{fig:ExpEndUserNumber} and Table \ref{table:EndUserNumberOffloadingRate} show the query response time and assignment ratios under different numbers of end users. As the number of end users increases, the system needs to handle more query requests, which leads to a rise in response time. When the number of users is relatively small (e.g., 15), the edge servers can efficiently handle most queries, exhibiting a high assignment ratio to the edge servers. However, as the number of end users increases, the load on edge servers gradually increases, prompting the system to assign more queries to the cloud.

\begin{table}[]
\centering
\caption{{Assignment Ratio for Varying End Users}}
\begin{tabular}{rrrrrrrrr}
\hline
 & \multicolumn{4}{c}{WatDiv} & \multicolumn{4}{c}{DBpedia}\\
\cline{2-9}
& 15  & 20  & 25  & 30 & 15  & 20  & 25  & 30  \\
\hline
$ES_1$ & 16.00\% & 18.50\% & 18.00\% & 20.00\% & 30.00\% & 28.00\% & 23.20\% & 20.00\% \\
$ES_2$ & 29.33\% & 24.50\% & 24.80\% & 23.33\% & 18.00\% & 23.00\% & 19.60\% & 16.67\% \\
$ES_3$ & 26.00\% & 27.50\% & 26.00\% & 23.33\% & 27.33\% & 28.00\% & 26.00\% & 26.67\% \\
$ES_4$ & 28.00\% & 26.00\% & 25.20\% & 26.67\% & 18.00\% & 12.50\% & 16.80\% & 20.00\% \\
$Cloud$ &0.67\%  &  3.50\% &  6.00\% &  6.67\% &  6.67\% &  8.50\% & 14.40\% & 16.67\% \\
\hline
\end{tabular}%
\label{table:EndUserNumberOffloadingRate}
\end{table}

\begin{figure}
\centering
\subfigure[{WatDiv}]{%
        \centering
		\resizebox{0.45\columnwidth}{!}{
	\definecolor{lightBlue}{HTML}{42B4E7}
\definecolor{lightGrey}{HTML}{336699}
\definecolor{lightGreen}{HTML}{73B66B}
\definecolor{yellow}{HTML}{FFCB18}
\begin{tikzpicture}[font=\Large]
    \begin{axis}[
        ybar,
        bar width=6pt,
        xlabel=Number of End Users,
        ylabel=Query Response Time (in ms),
        xtick=data,
        xticklabels={15,20,25,30},
        ymax=10000000,
        ymode=log,
        enlarge x limits=0.15,
        legend cell align=left,
        legend style={draw=none},
        legend pos=north west,
        legend columns=2 
    ]
    
    \addplot plot[ ybar, fill=lightBlue, draw=black, postaction={
        pattern=north east lines,
        pattern color=black
    }] coordinates {
        (5,     363220.6)
        (10,    560313.9)
        (15,    1092300.7)
        (20,    1634647)
    };
    \addlegendentry{Cloud-Only};
    
    \addplot plot[ybar, fill=lightGrey, draw=black] coordinates {
        (5,     127852.7)
        (10,    132180.3)
        (15,    201920.4)
        (20,    282223.4)
    };
    \addlegendentry{Random};
    
    \addplot plot[ybar, fill=lightGreen, draw=black, postaction={
        pattern=north west lines,
        pattern color=black 
    }] coordinates {
        (5,     10971.7)
        (10,    19396.6)
        (15,    25505.8)
        (20,    38081.2)
    };
    \addlegendentry{Edge-First};
    
    \addplot plot[ybar, fill=orange, draw=black] coordinates {
        (5,     10612.5)
        (10,      19229.8)
        (15,      24481.2)
        (20,      35194.4)
    };
    \addlegendentry{Greedy};
    
    \addplot plot[ybar, fill=yellow, draw=black] coordinates {
        (5,     9431.8)
        (10,    17415.8)
        (15,    22941.3)
        (20,    28274.2)
    };
    \addlegendentry{Ours};
 
    \end{axis}
\end{tikzpicture}
    }
       \label{fig:ExpEndUserNumberWatDiv}%
       }
   \subfigure[{DBpedia}]{%
        \centering
		\resizebox{0.45\columnwidth}{!}{
	\definecolor{lightBlue}{HTML}{42B4E7}
\definecolor{lightGrey}{HTML}{336699}
\definecolor{lightGreen}{HTML}{73B66B}
\definecolor{yellow}{HTML}{FFCB18}
\begin{tikzpicture}[font=\Large]
    \begin{axis}[
        ybar,
        bar width=6pt,
        xlabel=Number Of End Users,
        ylabel=Query Response Time (in ms),
        xtick=data,
        xticklabels={15,20,25,30},
        ymax=10000000,
        ymode=log,
        enlarge x limits=0.15,
        legend cell align=left,
        legend style={draw=none},
        legend pos=north west,
        legend columns=2 
    ]
    
    \addplot plot[ ybar, fill=lightBlue, draw=black, postaction={
        pattern=north east lines,
        pattern color=black
    }] coordinates {
        (5,     992115.8)
        (10,    1078871.7)
        (15,    2551891)
        (20,    3455342)
    };
    \addlegendentry{Cloud-Only};
    
    \addplot plot[ybar, fill=lightGrey, draw=black] coordinates {
        (5,     374580.9)
        (10,    506738.3)
        (15,    962529.3)
        (20,    1113280.5)
    };
    \addlegendentry{Random};
    
    \addplot plot[ ybar, fill=lightGreen, draw=black, postaction={
        pattern=north west lines,
        pattern color=black 
    }] coordinates {
        (5,     124593.7)
        (10,    192703.3)
        (15,    288858.4)
        (20,    345918.6)
    };
    \addlegendentry{Edge-First};
    
    \addplot plot[ybar, fill=orange, draw=black] coordinates {
        (5,     125874.9)
        (10,      182857.8)
        (15,      270339.8)
        (20,      336957)
    };
    \addlegendentry{Greedy};
    
    \addplot plot[ ybar, fill=yellow, draw=black] coordinates {
        (5,     121860.8)
        (10,    174393.4)
        (15,    262882.8)
        (20,    321979)
    };
    \addlegendentry{Ours};
 
    \end{axis}
 
\end{tikzpicture}
    }
       \label{fig:ExpEndUserNumberDBpedia}%
       }
\caption{Varying Number of End Users}%
 \label{fig:ExpEndUserNumber}
\end{figure}
}

\subsubsection{Varying Selectivities of Queries}
This experiment aims to explore the impact of query selectivity on system response time and assignment ratio. Generally, the higher the query selectivity, the fewer results are returned. In the experiment, we analyze the impact by using queries with varying result sizes (in bytes).
Fig. \ref{fig:ExpQuerySelectivity} and Table \ref{table:ExpQuerySelectivity} show the changes in response time and assignment ratio under different selectivity conditions. The two datasets exhibit significant differences in result sizes, with Dbpedia showing more distinct selectivity levels. When the average result size is small (e.g., less than $10^{5}$), computation and transmission overhead is low, leading to shorter response times. In such cases, edge servers can efficiently process most queries, resulting in higher assignment ratios. As the average result size increases, the load on edge servers grows, leading to longer response times and lower assignment ratios. When the average result size is very large, although edge resources are limited, assigning queries to the cloud results in even higher transmission delays. Therefore, to optimize efficiency, more queries are assigned to edges. Our approach achieves up to a 37.41\% reduction in response time by effectively balancing computation and transmission costs.

\begin{table}[]
\centering
\caption{{Assignment Ratio for Varying Queries Selectivities}}
\begin{tabular}{rrrrrrrrr}
\hline
 & \multicolumn{4}{c}{WatDiv} & \multicolumn{4}{c}{DBpedia}\\
\cline{2-9}
& <$10^{5}$  & $10^{5}-10^{6}$  & $10^{6}-10^{7}$  & >$10^{7}$ & <$10^{5}$  & $10^{5}-10^{6}$  & $10^{6}-10^{8}$  & >$10^{8}$ \\
\hline
$ES_1$ & 15.50\% & 11.50\% & 29.50\% & 29.50\% & 27.00\% & 21.50\% & 28.50\% & 23.50\%\\
$ES_2$ & 26.00\% & 31.50\% & 24.00\% & 27.00\% & 28.00\% & 26.50\% & 24.00\% & 24.50\%\\
$ES_3$ & 26.50\% & 29.50\% & 16.00\% & 25.00\% & 23.00\% & 26.00\% & 28.50\% & 28.50\%\\
$ES_4$ & 21.00\% & 24.50\% & 26.00\% & 18.50\% & 22.00\% & 26.00\% & 16.00\% & 23.50\%\\
$Cloud$ &11.00\% & 3.00\%  &  4.50\% &  0.00\% &  0.00\% &  0.00\% &  3.00\% &  0.00\%\\
\hline
\end{tabular}%
\label{table:ExpQuerySelectivity}
\end{table}

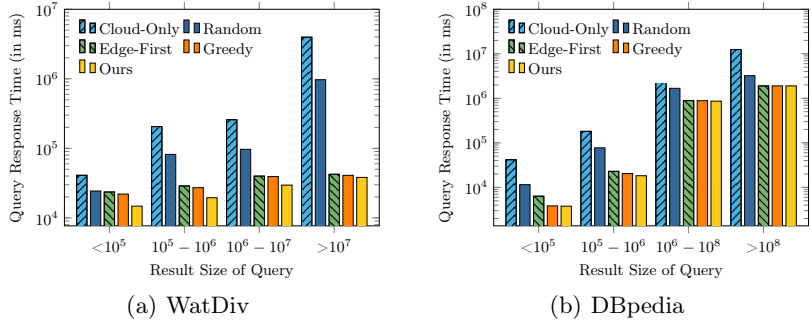
\begin{figure}
\centering
\subfigure[{WatDiv}]{%
        \centering 
		\resizebox{0.45\columnwidth}{!}{
	\definecolor{lightBlue}{HTML}{42B4E7}
\definecolor{lightGrey}{HTML}{336699}
\definecolor{lightGreen}{HTML}{73B66B}
\definecolor{yellow}{HTML}{FFCB18}
\begin{tikzpicture}[]
    \begin{axis}[
        ybar,
        width=8cm,
        height=6cm,
        bar width=6pt,
        xlabel=Result Size of Query,
        ylabel=Query Response Time (in ms),
        xtick=data,
        xticklabels={<$10^{5}$,$10^{5}-10^{6}$,$10^{6}-10^{7}$,>$10^{7}$},
        ymax=10000000,
        ymode=log,
        enlarge x limits=0.2,
        legend cell align=left,
        legend style={draw=none},
        legend pos=north west,
        legend columns=2 
    ]
    
    \addplot plot[ ybar, fill=lightBlue, draw=black, postaction={
        pattern=north east lines,
        pattern color=black
    }] coordinates {
        (5,     40979.8)
        (10,    205594)
        (15,    258078.7)
        (20,    3968884.5)
    };
    \addlegendentry{Cloud-Only};
    
    \addplot plot[ybar, fill=lightGrey, draw=black] coordinates {
        (5,     24366.2)
        (10,    81814.9)
        (15,    96881.1)
        (20,    971030.7)
    };
    \addlegendentry{Random};
    
    \addplot plot[ybar, fill=lightGreen, draw=black, postaction={
        pattern=north west lines,
        pattern color=black 
    }] coordinates {
        (5,     23574.6)
        (10,    28799)
        (15,    39883)
        (20,    42368.1)
    };
    \addlegendentry{Edge-First};
    
    \addplot plot[ybar, fill=orange, draw=black] coordinates {
        (5,     22020)
        (10,      27212)
        (15,      39284)
        (20,      40944.1)
    };
    \addlegendentry{Greedy};
    
    \addplot plot[ybar, fill=yellow, draw=black] coordinates {
        (5,     14754.5)
        (10,    19512)
        (15,    29613)
        (20,    38217.4)
    };
    \addlegendentry{Ours};
 
    \end{axis}
\end{tikzpicture}
    }
       \label{fig:ExpQuerySelectivityWatDiv}%
       }
   \subfigure[{DBpedia}]{%
        \centering 
		\resizebox{0.45\columnwidth}{!}{
	\definecolor{lightBlue}{HTML}{42B4E7}
\definecolor{lightGrey}{HTML}{336699}
\definecolor{lightGreen}{HTML}{73B66B}
\definecolor{yellow}{HTML}{FFCB18}
\begin{tikzpicture}[]
    \begin{axis}[
        ybar,
        width=8cm,
        height=6cm,
        bar width=6pt,
        xlabel=Result Size of Query,
        ylabel=Query Response Time (in ms),
        xtick=data,
        xticklabels={<$10^{5}$,$10^{5}-10^{6}$,$10^{6}-10^{8}$,>$10^{8}$},
        ymax=100000000,
        ymode=log,
        enlarge x limits=0.2,
        legend cell align=left,
        legend style={draw=none},
        legend pos=north west,
        legend columns=2 
    ]
    
    \addplot plot[ ybar, fill=lightBlue, draw=black, postaction={
        pattern=north east lines,
        pattern color=black
    }] coordinates {
        (5,     41780.8)
        (10,    181227.7)
        (15,    5182022)
        (20,    12464541)
    };
    \addlegendentry{Cloud-Only};
    
    \addplot plot[ybar, fill=lightGrey, draw=black] coordinates {
        (5,     11499.4)
        (10,    77373.8)
        (15,    1678310.6)
        (20,    3227753.6)
    };
    \addlegendentry{Random};
    
    \addplot plot[ ybar, fill=lightGreen, draw=black, postaction={
        pattern=north west lines,
        pattern color=black 
    }] coordinates {
        (5,     6339)
        (10,    22905)
        (15,    891067.3)
        (20,    1901883.6)
    };
    \addlegendentry{Edge-First};
    
    \addplot plot[ybar, fill=orange, draw=black] coordinates {
        (5,     3838.5)
        (10,    20437.9)
        (15,    885366.5)
        (20,    1901883.6)
    };
    \addlegendentry{Greedy};
    
    \addplot plot[ ybar, fill=yellow, draw=black] coordinates {
        (5,     3788)
        (10,    18274.7)
        (15,    865727.6)
        (20,    1901883.6)
    };
    \addlegendentry{Ours};
 
    \end{axis}
\end{tikzpicture}
    }
       \label{fig:ExpQuerySelectivityDBpedia}%
       }
\caption{Varying Selectivities of Queries}%
 \label{fig:ExpQuerySelectivity}
\end{figure}

{\subsection{Overhead Analysis}}

{\subsubsection{Scheduling overhead analysis.}
Fig. 14 shows the scheduling time and its proportion under different datasets and system scales.
The results show that the scheduling time increases as expected with system scale.
On computation-intensive DBpedia the proportion of scheduling time remains consistently low, while queries in the WatDiv are relatively lightweight. Thus, the complexity of the algorithm becomes more evident under large-scale settings.
Nevertheless, even in this case, the scheduling overhead does not become the dominant factor in system performance.}

\begin{figure}
\centering
\subfigure[{WatDiv}]{%
        \centering
		\resizebox{0.45\columnwidth}{!}{


\definecolor{lightBlue}{HTML}{42B4E7}
\definecolor{lightGrey}{HTML}{336699}
\definecolor{lightGreen}{HTML}{73B66B}
\definecolor{yellow}{HTML}{FFCB18}
\begin{tikzpicture}
\begin{axis}[
    ybar,
    width=8cm,
    height=6cm,
    bar width=8pt,
    xtick=data,
    xlabel={($|\mathcal{K}|, |\mathcal{N}|$)},
    xticklabel style={rotate=30,anchor=east,yshift=-2mm},
    xticklabels={(4,20), (8,40), (16,80), (32,160), (64,320), (128,640)},
    ylabel={Time (in ms)},
    ymode=log,
    ymin=1000, ymax=500000,
    legend style={at={(0.05,0.95)}, anchor=north west, draw=none},
    axis y line*=left, 
]

\addplot[ybar, fill=yellow] coordinates {
    (1, 1286) (2, 3597) (3, 8834) (4, 23906) (5, 87416) (6, 324259)
};
\addlegendentry{Scheduling Time}
\end{axis}

\begin{axis}[
    width=8cm,
    height=6cm,
    axis y line*=right,
    axis x line=none,
    ylabel={Ratio (\%)},
    ymin=0, 
    ymax=100,
    legend style={at={(0.05,0.85)}, anchor=north west, draw=none},
]

\addplot[lightGrey, thick, mark=triangle*, mark size=3pt] coordinates {
    (1, 7.38) (2, 9.00) (3, 10.63) (4, 16.60) (5, 25.68) (6, 33.68)
};
\addlegendentry{Scheduling Overhead Ratio}
\end{axis}
\end{tikzpicture}
    }
       }
   \subfigure[{DBpedia}]{%
        \centering
		\resizebox{0.45\columnwidth}{!}{
	\definecolor{lightBlue}{HTML}{42B4E7}
\definecolor{lightGrey}{HTML}{336699}
\definecolor{lightGreen}{HTML}{73B66B}
\definecolor{yellow}{HTML}{FFCB18}
\begin{tikzpicture}
\begin{axis}[
    ybar,
    width=8cm,
    height=6cm,
    bar width=8pt,
    xtick=data,
    xlabel={($|\mathcal{K}|, |\mathcal{N}|$)},
    xticklabel style={rotate=30,anchor=east,yshift=-2mm},
    xticklabels={(4,20), (8,40), (16,80), (32,160), (64,320), (128,640)},
    ylabel={Time (in ms)},
    ymode=log,
    ymin=1000, ymax=500000,
    legend style={at={(0.05,0.95)}, anchor=north west, draw=none},
    axis y line*=left, 
]

\addplot[fill=yellow] coordinates {
    (1, 2121) (2, 4369) (3, 14352) (4, 30546) (5, 99437) (6, 375029)
};
\addlegendentry{Scheduling Time}

\end{axis}

\begin{axis}[
    width=8cm,
    height=6cm,
    axis y line*=right,
    axis x line=none,
    ylabel={Ratio (\%)},
    ymin=0, 
    ymax=20,
    legend style={at={(0.05,0.85)}, anchor=north west, draw=none},
]

\addplot[lightGrey, thick, mark=triangle*, mark size=3pt] coordinates {
    (1, 1.12) (2, 0.76) (3, 0.58) (4, 0.66) (5, 1.32) (6, 1.99)
};
\addlegendentry{Scheduling Overhead Ratio}

\end{axis}
\end{tikzpicture}
    }
       }
{\caption{Scheduling Overhead Ratio}}%
 \label{fig:ExpScheduling}
\end{figure}
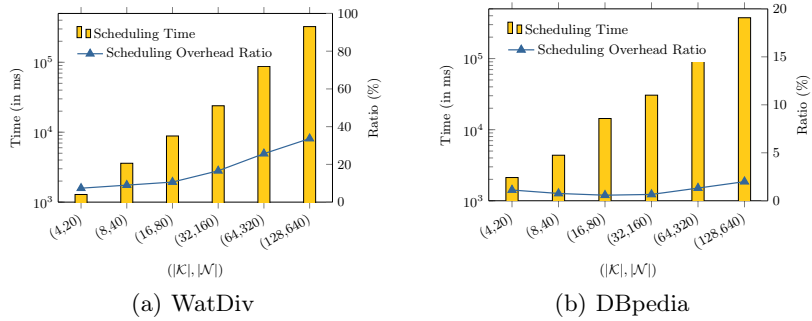

{\subsubsection{Construction overhead.}
We report, in Table 11, the construction time for WatDiv and DBpedia, across varying numbers of edge servers ($|\mathcal{K}|$) and end users ($|\mathcal{N}|$). 
The construction time scales near-linearly with the number of servers, demonstrating the scalability of our approach for large-scale deployments. Since this process is performed offline, it does not incur any latency for online query processing.
}

{
\begin{table}[]
\centering
{\caption{{Pattern-induced subgraph construction overhead (sec)}}}
\begin{tabular}{c@{\hspace{0.5em}}c@{\hspace{0.5em}}c@{\hspace{0.5em}}c@{\hspace{0.5em}}c@{\hspace{0.5em}}c@{\hspace{0.5em}}c}
\hline
($|\mathcal{K}|, |\mathcal{N}|$) & (4, 20) & (8, 40) & (16, 80) & (32, 160) & (64, 320) & (128, 640)\\
\hline
WatDiv & 11.34 & 19.36 & 35.42 & 66.89 & 122.97 & 246.22\\
DBpedia & 58.84 & 78.45 & 104.62 & 174.74 & 323.17 & 591.17\\
\hline                  
\end{tabular}%
\label{table:construction_time}
\end{table}}

\vspace{-0.2in}
\section{Related Work}\label{sec:relatedwork}

\noindent\textbf{Database Systems in Edge Computing.} Several database systems have been tailored for the edge computing environment \cite{VLDB2020:SuccinctEdge, EDBT2021:SuccinctEdge, SIGMOD18:DPaxos, SoCC19:WedgeDB, HotEdge20:DataEdge} to cope with limited resources and latency sensitivity. SuccinctEdge \cite{VLDB2020:SuccinctEdge, EDBT2021:SuccinctEdge} extends LiteMat \cite{ESWC19:LiteMat} with a compact, self-indexed in-memory layout for storing RDF graphs on edge servers.
In terms of data consistency,
DPaxos \cite{SIGMOD18:DPaxos} introduces a Paxos-based consensus protocol for managing partitioned data across datacenters and edge nodes.
WedgeDB \cite{SoCC19:WedgeDB} enables distributed and secure transaction processing in edge environments.
Trivedi et al. \cite{HotEdge20:DataEdge} further analyze data sharing requirements in dege applications and propose Griffin, an edge data sharing service.

\noindent\textbf{Task Offloading in Edge Computing.} The task offloading allocation problem in edge computing has attracted significant attention \cite{TVT19:JointTaskOffloading, TMC19:EnergyEfficientOffloading,TMC21:MultiTaskLearning,TPDS22:RevenueDriven,TSC23:UnknownSystemSide,TMC23:MultiObjOffloading,TMC24:GameTheoryApproach}. 
In mobile edge computing systems that only edge-side computation, Tran et al. \cite{TVT19:JointTaskOffloading} and Yang et al. \cite{TMC21:MultiTaskLearning} formulated the problem as a mixed-integer nonlinear programming, while Ma et al. \cite{TPDS22:RevenueDriven} applied convex optimization and heuristic algorithm. 
For edge-cloud system, Guo et al. \cite{TMC19:EnergyEfficientOffloading} proposed a mixed-integer programming problem and distributed solution algorithm, and Chen et al. \cite{TMC24:GameTheoryApproach} introduced a game-theoretic decentralized algorithm.
For other system scenarios,
Dai et al. \cite{TSC23:UnknownSystemSide} employed the multi-armed bandit theory \cite{10.1023/A:1013689704352} to develop an online learning-based dependent task offloading algorithm, and Xiao et al. \cite{TMC23:MultiObjOffloading} optimized D2D-assisted offloading via enhanced binary particle swarm algorithm.


{\section{Conclusions and Future Work}}\label{sec:Conclusions} 
In this paper, we study the assignment of SPARQL query processing to edge servers to improve system efficiency.
We introduce pattern-induced subgraphs to guide data placement on edge servers to quickly identify servers capable of handling specific queries.
For network scheduling, we model the joint query offloading decision and resource allocation problem as an MINLP problem and propose an assignment algorithm.
Experimental results show that our method achieves higher computational efficiency than baseline methods. 
Future work will extend this framework to support complex queries by combining multiple patterns, drawing on techniques from view materialization and query rewriting \cite{DBLP:journals/pacmmod/CaoGZZ25,DBLP:journals/pacmmod/Pang0YY24,DBLP:conf/sigmod/TroullinouKLM21}. We aim to explore inter-edge communication to further enhance flexibility and load balancing among edge servers.

\section{AI-Generated Content Acknowledgment} 
We have not employed any artificial intelligence (AI) techniques to produce any content in this paper.

\bibliographystyle{splncs04}
\bibliography{sample-base}


\end{document}